\documentclass[runningheads,a4paper]{llncs}
\usepackage{amssymb}
\usepackage{graphicx}
\usepackage{color,xcolor}
\usepackage{amsmath,mathrsfs}

\newcommand{\GF}[2][2]{{\mathbb F}_{{#1}^{#2}}}
\newcommand{\Tr}[2][1]{Tr_{#1}^{#2}}
\newcommand{\GFc}[1][2]{\overline{\GF{}}}

\let\BEGINPROOF\proof
\let\ENDPROOF\endproof

\renewenvironment{proof}{\BEGINPROOF}{\qed\ENDPROOF}

\title{Solving $x^{2^k+1}+x+a=0$ in $\GF{n}$ with $\gcd(n,k)=1$}

\author{Kwang Ho Kim\inst{1}\and Sihem Mesnager\inst{2}}

\institute{ Institute of Mathematics, State Academy of Sciences and
PGItech Corp.,
Pyongyang, Democratic People's Republic of Korea\\
\email{khk.cryptech@gmail.com} \and LAGA, Department of Mathematics, Universities of Paris VIII and Paris XIII, CNRS, UMR 7539 and Telecom ParisTech, France\\
\email{smesnager@univ-paris8.fr}\\}

\begin{document}

\maketitle

\makebox[\linewidth]{\today}

\begin{abstract}
Let $N_a$ be the number of solutions to the equation
$x^{2^k+1}+x+a=0$ in $\GF {n}$ where $\gcd(k,n)=1$. In 2004, by
Bluher \cite{BLUHER2004} it was known that possible values of $N_a$
are only 0, 1 and 3. In 2008, Helleseth and Kholosha
\cite{HELLESETH2008} have got criteria for $N_a=1$ and an explicit
expression of the unique solution when $\gcd(k,n)=1$. In 2014,
Bracken, Tan and Tan \cite{BRACKEN2014} presented a criterion for
$N_a=0$ when $n$ is even and $\gcd(k,n)=1$.

This paper completely solves this equation $x^{2^k+1}+x+a=0$ with
only condition $\gcd(n,k)=1$. We explicitly calculate all possible
zeros in $\GF{n}$ of $P_a(x)$. New criterion for which $a$, $N_a$ is
equal to $0$, $1$ or $3$ is a by-product of our result.
\end{abstract}

{\bf Keywords} Equation $\cdot$
M$\ddot{u}$ller-Cohen-Matthews (MCM) polynomials $\cdot$ Dickson
polynomial $\cdot$ Zeros of polynomials $\cdot$ Irreducible
polynomials.

\section{Introduction}

Let $n$ be a positive integer and $\GF n$ be the finite field of
order $2^n$. The zeros of the polynomial
\begin{equation}
  \label{eq:1}
  P_a(x) = x^{2^k+1}+x+a,\quad a\in\GF n^\star
\end{equation}
has been studied in \cite{BLUHER2004,HELLESETH2008,HELLESETH2010}.
 This polynomial has
arisen in several different contexts including the inverse Galois
problem \cite{AbhyankarCohenZieve2000}, the construction of
difference sets with Singer parameters \cite{DILLON2004}, to find
cross-correlation between $m$-sequences
\cite{DOBBERTIN2006,HELLESETH2008} and to construct error correcting
codes \cite{BRACKEN2009}.  More general polynomial forms
$x^{2^k+1}+rx^{2^k}+sx+t$ are also transformed into this form by a
simple substitution of variable $x$ with $(r+s^{\frac 1{2^k}})x+r$.

It is clear that $P_a(x)$ have no multiple roots. In 2004, Bluher
\cite{BLUHER2004} proved following result.
\begin{theorem}
  \label{thm:number_zeros}
  For any $a\in\GF n^*$ and a positive integer $k$, the polynomial
  $P_a(x)$ has either none, one, three or $2^{\gcd(k,n)}+1$ zeros in
  $\GF n$.
\end{theorem}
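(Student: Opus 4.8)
The plan is to treat $N_a=0$ as vacuous and otherwise fix one root $x_0$ and study how the remaining roots sit relative to it. Since $P_a(0)=P_a(1)=a\neq 0$, every root satisfies $x_0\notin\{0,1\}$. Writing a second root as $x_0+t$ with $t\neq 0$ and subtracting the two instances of $P_a=0$, the top term $x_0^{2^k+1}$ and the constant $a$ cancel, leaving
\[
t^{2^k+1}+x_0 t^{2^k}+(1+x_0^{2^k})\,t=0.
\]
Hence $N_a=\#S$ with $S=\{t\in\GF n : t^{2^k+1}+x_0 t^{2^k}+(1+x_0^{2^k})t=0\}$, where $0\in S$ accounts for $x_0$ itself. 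The one structural fact I would isolate at the start is that $d:=\gcd(k,n)$ divides $k$, so the Frobenius $\sigma\colon z\mapsto z^{2^k}$ is $\GF d$-linear on $\GF n$, has order $n/d$, and fixes exactly $\GF d$; all the arithmetic below is governed by this subfield.

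Next I would recast the defining relation of $S$ projectively. For $t\in S\setminus\{0\}$ it reads $\sigma(t)=\dfrac{(1+\sigma(x_0))\,t}{t+x_0}$, i.e. $\sigma(t)=g(t)$ for the Möbius transformation $g$ with matrix $\left(\begin{smallmatrix}1+\sigma(x_0)&0\\ 1&x_0\end{smallmatrix}\right)\in\mathrm{PGL}_2(\GF n)$. Applying $\sigma$ repeatedly, and to the coefficients, turns this into $\sigma^{i+1}(t)=g^{(i)}(\sigma^i(t))$, where $g^{(i)}$ is $g$ with its entries replaced by their $\sigma^i$-images; after $m=n/d$ steps, where $\sigma^m=\mathrm{id}$ on $\GF n$, every element of $S$ is a fixed point on $\mathbb P^1(\GF n)$ of the single Frobenius-twisted transformation $\mathcal G=g^{(m-1)}\circ\cdots\circ g^{(1)}\circ g$. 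This reduces the determination of $N_a$ to understanding the $\GF n$-rational fixed locus of $\mathcal G$, while keeping in mind that $S$ excludes the point at infinity. (As a complementary tool for the rigidity below I would also record that $t_1+t_2\in S$ iff the alternating $\GF d$-bilinear form $B(u,v)=u^{2^k}v+uv^{2^k}$ vanishes on $t_1,t_2$.)

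The argument then splits according to whether $\mathcal G$ is trivial in $\mathrm{PGL}_2$. If $\mathcal G$ is nontrivial it has at most two fixed points on all of $\mathbb P^1$, forcing $N_a\le 2$; combined with $0\in S$ this leaves $N_a\in\{1,2\}$ in this regime. If $\mathcal G$ is the identity — a norm-type splitting condition on the twisted product of the $g^{(i)}$ — the semilinear operator $t\mapsto g(\sigma(t))$ descends to a subfield, and its $\GF n$-fixed locus is the set of rational points of a form of $\mathbb P^1$, hence a projective sub-line $\GF e\cup\{\infty\}$ of cardinality $2^e+1$ for some $e\mid d$. The crux, and the step I expect to be hardest, is the rigidity that collapses these two regimes onto the asserted list: ruling out $N_a=2$ in the first case, and showing that for the specific matrix $g$ arising from $P_a$ only the extreme subfields $\mathbb F_2$ and $\GF d$ can occur in the second, so that $N_a\in\{3,2^d+1\}$ rather than any intermediate $2^e+1$. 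This classification of the twisted fixed locus — equivalently, the exclusion of every value of $N_a$ outside $\{0,1,3,2^{\gcd(k,n)}+1\}$ — is precisely the content of Bluher's analysis of the associated projective polynomial, and it is where I would expect the real work to lie.
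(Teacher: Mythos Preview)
The paper gives no proof of this statement; it is simply quoted from Bluher (2004) as background. There is thus no in-paper argument to set your proposal against. Your framework --- translate by a root $x_0$, rewrite the residual relation as $\sigma(t)=g(t)$ for a M\"obius map $g$, iterate to the twisted norm $\mathcal G=g^{(m-1)}\cdots g^{(1)}g$, and split on whether $\mathcal G$ is trivial in $\mathrm{PGL}_2(\GF n)$ --- is precisely the projective-polynomial mechanism underlying Bluher's own proof. But by your closing sentence the proposal is not a proof: the decisive classification is handed back to the cited reference.

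Two concrete corrections to the outline. First, in the $\mathcal G=\mathrm{id}$ branch you overstate the remaining work. Since every $\mathrm{PGL}_2$-valued $1$-cocycle for a Galois extension of finite fields is a coboundary, the condition $\mathcal G=\mathrm{id}$ already yields $h\in\mathrm{PGL}_2(\GF n)$ with $g=\sigma(h)\,h^{-1}$; the substitution $t=h(s)$ converts $\sigma(t)=g(t)$ into $\sigma(s)=s$, whose projective solution set is exactly $\mathbb P^1(\GF d)$ with $d=\gcd(k,n)$. As $g(\infty)=1+\sigma(x_0)\neq\infty$, the point at infinity is never a solution, and one obtains $N_a=2^d+1$ on the nose --- no intermediate $2^e+1$ with $e<d$ ever appears, so that half of your ``crux'' is empty. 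Second, and more seriously, your plan to \emph{rule out} $N_a=2$ in the $\mathcal G\neq\mathrm{id}$ branch is not achievable when $d>1$: with $k=n=2$ one has $x^5+x+1=(x^2+x+1)(x^3+x^2+1)$, hence exactly two roots in $\mathbb F_4$. Bluher's actual conclusion is $N_a\in\{0,1,2,2^d+1\}$; the value ``three'' in the paper's formulation is just the instance $2^1+1$ of the last entry in the case $d=1$, which is the only regime the rest of the paper addresses.
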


In this paper, we will consider a particular case with
$\gcd(n,k)=1$. In this case, Theorem~\ref{thm:number_zeros} says
that $P_a(x)$ has none, one or three zeros in $\GF n$.

In 2008, Helleseth and Kholosha \cite{HELLESETH2008} have provided
criteria for which $a$ $P_a(x)$ has exactly one zero  in $\GF{n}$
 and an explicit expression of the unique zero when
$\gcd(k,n)=1$.

In 2014, Bracken, Tan and Tan \cite{BRACKEN2014} presented a
criterion for which $a$ $P_a(x)$ has no zero in $\GF{n}$ when $n$ is
even and $\gcd(k,n)=1$.

 In this paper, we explicitly calculate all possible zeros in
$\GF n$ of $P_a(x)$ when $\gcd(n,k)=1$. New criterion for which $a$,
$P_a(x)$ has none, one or three zeros is a by-product of this
result.

We begin with showing that we can reduce the study to the case when
$k$ is odd. In the odd $k$ case, one core of our approach is to
exploit a recent polynomial identity special to characteristic 2,
presented in \cite{BLUHER2016} (Theorem~\ref{thm:identity_dickson}).
This polynomial identity enables us to divide the problem of finding
zeros in $\GF n$ of $P_a$ into two independent problems: Problem 1
to find the unique preimage of an element in $\GF n$ under a
M$\ddot{u}$ller-Cohen-Matthews (MCM) polynomial
 and Problem 2 to find preimages of an
element in $\GF n$ under a Dickson polynomial
(subsection~\ref{sec:preliminaries}). There are two key stages to
solve Problem~\ref{MCMProblem}. One is to establish a relation of
the MCM polynomial with the Dobbertin polynomial. Other is to find
an explicit solution formula for the affine equation $x^{2^k}+x=b,
b\in \GF{n}$. These are done  in subsection \ref{sec:probl-refmcmpr}
and Problem~\ref{MCMProblem} is solved by
Theorem~\ref{thm:SolvingA}.
 Problem~\ref{DicksonProblem} is relatively easy which is answered by
Theorem~\ref{even_n_odd_k} and Theorem~\ref{odd_n_odd_k} in
subsection \ref{sec:probl-refd}.  Finally, we collect together all
these results to give  explicit expression of all possible zeros of
$P_a$ in $\GF n$ by Theorem~\ref{thm:maineven},
Theorem~\ref{thm:mainoddodd} and Theorem~\ref{thm:mainoddeven}.

\section{Preliminaries}


In this section, we state some results on finite fields and
introduce classical polynomials that we shall need in the sequel. We
begin with the following result that will play an important role in
our study.
\begin{proposition}
  \label{prop:decomposition}
  Let $n$ be a positive integer. Then, every element $z$ of
  $\GF n^*:=\GF{n}\setminus\{0\}$ can be written (twice) $z=c+\frac 1c$ where
  $c\in\GF n^\star:=\GF{n}\setminus \GF{}$ if $\Tr{n}(\frac{1}{z})=0$ and $c\in \mu_{2^n+1}^{\star}:=\{\zeta\in\GF {2n}\mid
\zeta^{2^n+1}=1\}\setminus \{1\}$ if $\Tr{n}(\frac{1}{z})=1$.
\end{proposition}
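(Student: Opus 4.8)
The plan is to prove the statement by analyzing the equation $z = c + \frac{1}{c}$ directly. Given $z \in \GF{n}^*$, I would set $c + \frac{1}{c} = z$, which upon multiplying by $c$ becomes the quadratic $c^2 + zc + 1 = 0$ (using characteristic $2$, so $-1 = 1$). First I would observe that $c = 0$ is never a root since the constant term is $1$, and that if $c$ is a root then so is $1/c$ (by the symmetry of the equation under $c \mapsto 1/c$), giving the claimed "twice" — the two roots $c_1, c_2$ satisfy $c_1 c_2 = 1$, so they are inverses of each other. This also explains why exactly two representations exist, provided the roots lie in the appropriate set.

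The core of the argument is to decide \emph{where} the roots $c$ live, depending on $\Tr{n}(1/z)$. The standard fact about quadratics $c^2 + zc + 1 = 0$ over a field of characteristic $2$ is that, after substituting $c = z y$ to normalize, one reduces to an Artin--Schreier equation $y^2 + y = 1/z^2$, which has solutions in $\GF{n}$ if and only if $\Tr{n}(1/z^2) = 0$. Since the trace is invariant under the Frobenius squaring, $\Tr{n}(1/z^2) = \Tr{n}(1/z)$, so the roots $c$ lie in $\GF{n}$ exactly when $\Tr{n}(1/z) = 0$, and they lie in the quadratic extension $\GF{2n} \setminus \GF{n}$ exactly when $\Tr{n}(1/z) = 1$. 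This is the pivotal case split matching the two alternatives in the statement.

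Next I would pin down the two claimed membership conditions precisely. In the trace-zero case, the roots $c \in \GF{n}$; I must check $c \notin \GF{} = \{0,1\}$, i.e. $c \neq 1$, which holds because $c = 1$ would force $z = 1 + 1 = 0$, contradicting $z \neq 0$. This yields $c \in \GF{n}^\star = \GF{n} \setminus \GF{}$ as required. In the trace-one case, the roots lie in $\GF{2n} \setminus \GF{n}$; here I would verify that they actually belong to $\mu_{2^n+1}$, the group of $(2^n+1)$-th roots of unity. The key computation is that the two conjugate roots $c, c^{2^n}$ (related by the nontrivial Galois automorphism of $\GF{2n}/\GF{n}$) multiply to the constant term $1$, so $c^{2^n+1} = c \cdot c^{2^n} = 1$, placing $c$ in $\mu_{2^n+1}$. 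Excluding $c = 1$ (again since $z \neq 0$) gives $c \in \mu_{2^n+1}^\star$.

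The main obstacle I anticipate is the bookkeeping in the trace-one case: I must confirm that the second root of the quadratic is precisely the Galois conjugate $c^{2^n}$, which follows because the quadratic has coefficients in $\GF{n}$ so its roots are swapped by $\GF{2n}/\GF{n}$-conjugation, and then that this conjugate equals $1/c$ (consistent with $c_1 c_2 = 1$) — so the "twice" phenomenon and the $\mu_{2^n+1}$ membership are two facets of the same relation $c^{2^n} = 1/c$. I would make sure the Artin--Schreier solvability criterion is applied to the correct extension and that the substitution $c = zy$ is legitimate (it is, since $z \neq 0$). Everything else is routine verification in characteristic $2$.
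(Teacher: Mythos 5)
Your proof is correct. The first half coincides with the paper's: the substitution $c=zy$ turning $c^2+zc+1=0$ into the Artin--Schreier equation $y^2+y=1/z^2$, together with $\Tr{n}(1/z^2)=\Tr{n}(1/z)$, is exactly the paper's reduction (there written as $\frac{1}{z^2}=\frac cz+(\frac cz)^2$), and your exclusion of $c\in\{0,1\}$ handles the trace-zero case as required. In the trace-one case, however, you take a genuinely different route. The paper argues globally and by counting: it starts from $c\in\mu_{2^n+1}^{\star}$, notes that $c+\frac1c$ is fixed by $x\mapsto x^{2^n}$ and hence lies in $\GF n$, observes the map $c\mapsto c+\frac1c$ is $2$-to-$1$ on a domain of size $2^n$, and concludes by comparing cardinalities that its image is the full trace-one set of size $2^{n-1}$ (with disjointness from the trace-zero image left implicit). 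You instead argue locally and constructively, going from $z$ to $c$: when the quadratic is irreducible over $\GF n$, its two roots in $\GF{2n}$ are Galois conjugates, and the product of roots being the constant term $1$ gives $c^{2^n+1}=c\cdot c^{2^n}=1$, i.e.\ $c\in\mu_{2^n+1}^{\star}$ --- the relation $c^{2^n}=1/c$ that the paper exploits in the opposite direction. Your version pinpoints the preimages of a given $z$ without any counting or surjectivity bookkeeping, and in particular makes the disjointness of the two cases automatic (the quadratic has only two roots, and $\mu_{2^n+1}^{\star}\cap\GF n=\emptyset$); the paper's version buys the stronger global picture that both maps are $2$-to-$1$ and onto the respective trace-level sets, exhibiting the partition of $\GF n^*$. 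One small point worth making explicit in your write-up: the two roots are distinct because $c=1/c$ would force $c=1$ and hence $z=0$ --- you only gesture at this when excluding $c=1$, but it is exactly what justifies the word ``twice.''
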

\begin{proof}
 For  $z\in \GF n^*$, $z=c+\frac 1c$ is equivalent to
$\frac{1}{z^2}=\frac{c}{z}+\left(\frac{c}{z}\right)^2$, and thus
this equation has a solution in $\GF{n}$ if and only if
$\Tr{n}(\frac{1}{z})=0$. Hence, mapping $c\longmapsto c+\frac 1c$ is
2-to-1 from  $\GF n$ onto $\{z\in \GF{n}\mid
\Tr{n}(\frac{1}{z})=0\}$ with convention $\frac{1}{0}:=0$. Also,
since $\left(c+\frac 1c\right)^{2^n}=c^{2^n}+\left(\frac
1c\right)^{2^n}=\frac 1c+c$ for $c\in \mu_{2^n+1}^{\star}$, the
mapping $c\longmapsto c+\frac 1c$ is 2-to-1 from
$\mu_{2^n+1}^{\star}$ with cardinality $2^n$ onto $\{z\in \GF{n}\mid
\Tr{n}(\frac{1}{z})=1\}$ with cardinality $2^{n-1}$.
\end{proof}
We shall also need two classical families of polynomials, Dickson
polynomials of the first kind and M$\ddot{u}$ller-Cohen-Matthews
polynomials.

The Dickson polynomial of the first kind of degree $k$ in
indeterminate $x$ and with parameter $a\in\GF n^*$ is
\begin{displaymath}
  D_k(x,a) = \sum_{i=0}^{\lfloor k/2\rfloor}\frac{k}{k-i}\binom{k-i}{i}a^kx^{k-2i},
\end{displaymath}
where $\lfloor k/2\rfloor$ denotes the largest integer less than or
equal to $k/2$. In this paper, we consider only Dickson polynomials
of the first kind $D_k(x,1)$, that we shall denote $D_k(x)$
throughout the paper. A classical property of Dickson polynomial
that we shall use extensively is
\begin{proposition}
  For any positive integer $k$ and any $x\in\GF n$, we have
  \begin{equation}
  \label{eq:Dickson}
  D_k\left(x+\frac 1x\right) = x^k + \frac{1}{x^k}.
\end{equation}
\end{proposition}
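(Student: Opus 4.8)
The plan is to establish \eqref{eq:Dickson} as a formal identity in the field of rational functions $\GF n(x)$, after which it specializes to every nonzero $x\in\GF n$; throughout I write $y:=x+\frac1x$ and $p_k:=x^k+\frac{1}{x^k}$, and I work in characteristic two, so that every minus sign may be replaced by a plus. The strategy is to show that the two sequences $\bigl(D_k(y)\bigr)_{k\ge 1}$ and $\bigl(p_k\bigr)_{k\ge 1}$ satisfy the same second-order linear recurrence with the same initial data, and then conclude by induction on $k$.

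First I would treat the right-hand side. Since $x$ and $\frac1x$ are precisely the two roots of $T^2+yT+1=0$, each of them satisfies $T^2=yT+1$; multiplying by $T^{k-2}$ and adding the relations for $T=x$ and $T=\frac1x$ yields
\begin{equation*}
  p_k=y\,p_{k-1}+p_{k-2},\qquad k\ge 3 .
\end{equation*}
For the base cases one checks directly that $p_1=y$ and, using the Frobenius identity $\left(x+\frac1x\right)^2=x^2+\frac1{x^2}$ valid in characteristic two, that $p_2=y^2$. Evaluating the explicit formula gives the matching values $D_1(y)=y$ and $D_2(y)=y^2+2=y^2$. Thus it remains only to verify that the Dickson polynomials obey the very same recurrence $D_k(y)=y\,D_{k-1}(y)+D_{k-2}(y)$.

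This last verification is the one genuinely computational point, and the main obstacle. Comparing the coefficient of $y^{k-2i}$ on both sides, and recalling that this coefficient in $D_k(y)$ equals $\frac{k}{k-i}\binom{k-i}{i}$, the recurrence reduces to the Pascal-type identity
\begin{equation*}
  \frac{k}{k-i}\binom{k-i}{i}=\frac{k-1}{k-1-i}\binom{k-1-i}{i}+\frac{k-2}{k-1-i}\binom{k-1-i}{i-1},\qquad 0\le i\le\Bigl\lfloor\frac k2\Bigr\rfloor ,
\end{equation*}
which is a true identity already over $\mathbb Z$. It is proved by expressing each term through factorials and simplifying; the only subtlety is at the extreme values $i=0$ and $i=\lfloor k/2\rfloor$, where one of the binomial coefficients on the right vanishes, so that the boundary terms behave correctly according to the parity of $k$. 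Once this identity is in hand, both sequences satisfy the same recurrence from the same starting values, the induction closes, and \eqref{eq:Dickson} follows for the indeterminate $x$, hence for every nonzero element of $\GF n$.
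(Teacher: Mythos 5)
Your proof is correct, but there is nothing in the paper to compare it with: the paper states this proposition as a classical property of Dickson polynomials and gives no proof at all, so you have supplied the missing (standard) verification, and you have done so completely. Every step checks out. In characteristic $2$ the elements $x$ and $\frac1x$ are indeed the two roots of $T^2+yT+1$ over $\GF n(x)$, giving $p_k=y\,p_{k-1}+p_{k-2}$ for $k\ge3$, and the initial values match: $D_1(y)=y$ and $D_2(y)=y^2+2=y^2$ (the factor $a^k$ in the paper's definition of $D_k(x,a)$ is a typo for the customary $(-a)^i$, immaterial here since $a=1$ and the characteristic is $2$). Your Pascal-type identity is genuinely true over $\mathbb Z$, hence mod $2$: with the conventions $\binom{m}{-1}=0$ and $\binom{m}{j}=0$ for $j>m$ it holds at $i=0$ and, for even $k$, at $i=\lfloor k/2\rfloor$ (for odd $k$ the top index is actually an ordinary instance in which nothing vanishes, so your parenthetical about parity is slightly loose but harmless), and for $k\ge3$ — the only range where you invoke the recurrence — the denominators $k-i\ge2$ and $k-1-i\ge1$ are nonzero. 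Over $\mathbb Z$ your identity encodes the signed recurrence $D_k(y)=yD_{k-1}(y)-D_{k-2}(y)$, the factors $(-1)^i$ cancelling from both sides, which reduces mod $2$ to the unsigned recurrence you need; equivalently, your computation re-proves Waring's formula for $u^k+v^k$ in terms of $u+v$ and $uv$, specialized to $v=\frac1u$, which is the usual one-line route to \eqref{eq:Dickson}. One pedantic remark: the proposition quantifies over all $x\in\GF n$ while \eqref{eq:Dickson} involves $\frac1x$; your specialization from the rational function field covers every $x\ne0$, which is the intended reading, and with the paper's convention $\frac10:=0$ the identity even persists at $x=0$, the constant term of $D_k$ being $0$ or $2\equiv0$ in characteristic $2$.
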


M$\ddot{u}$ller-Cohen-Matthews polynomials are another classical
polynomials defined as follows \cite{COHEN94},
\begin{equation*}
  f_{k,d}(X) := \frac{{T_k(X^c)}^d}{X^{2^k}}
\end{equation*}
where
\begin{equation*}
  T_k(X) := \sum_{i=0}^{k-1}X^{2^i}\quad\text{and}\quad cd = 2^k+1.
\end{equation*}
A basic property for such polynomials that we shall need in this
paper is the following statement.
\begin{theorem}
  \label{thm:MCM_permutation}
  Let $k$ and $n$ be two positive integers with $\gcd(k,n)=1$.
  \begin{enumerate}
  \item If $k$ is odd, then $f_{k,2^{k}+1}$ is a permutation
    on $\GF n$.
  \item If $k$ is even, then $f_{k,2^{k}+1}$ is a $2$-to-$1$
    on $\GF n$.
  \end{enumerate}
\end{theorem}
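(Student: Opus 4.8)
The plan is to understand the structure of the Müller-Cohen-Matthews polynomial $f_{k,d}$ with $d=2^k+1$, which forces $c=1$, so that $f_{k,2^k+1}(X) = \frac{T_k(X)^{2^k+1}}{X^{2^k}}$. Since $T_k(X) = \sum_{i=0}^{k-1}X^{2^i}$ is the linearized polynomial whose kernel is $\GF[2]{k}\cap\GF n$, and $\gcd(k,n)=1$ forces this intersection to be $\GF{}=\{0,1\}$, I would first pin down exactly how many zeros $T_k$ has in $\GF n$ and what its image looks like. Writing $y = T_k(X)$, note that $T_k(X)+T_k(X)^{2} = X + X^{2^k}$ wait, more usefully $T_k$ satisfies $T_k(X)^2 + T_k(X) = X^{2^k}+X$, so $T_k$ is closely tied to the additive map $X\mapsto X^{2^k}+X$.

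First I would analyze the map $L(X) := X^{2^k}+X$ on $\GF n$. Because $\gcd(k,n)=1$, the map $X\mapsto X^{2^k}$ is a generator-power of the Frobenius and $X^{2^k}=X$ only for $X\in\GF[2]{\gcd(k,n)}=\GF{}$, so $\ker L = \GF{}$ and $L$ is $2$-to-$1$ onto its image, an index-$2$ additive subgroup of $\GF n$. This lets me count zeros of $f_{k,2^k+1}$: since $f_{k,2^k+1}(X)=0$ iff $T_k(X)=0$ (as $T_k(0)=0$ and $X=0$ is the only pole-cancellation to check separately), and $T_k(X)=0$ forces $L(X)=T_k(X)^2+T_k(X)=0$, hence $X\in\GF{}$, giving $T_k(0)=0$ and $T_k(1)=k\bmod 2$. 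The parity of $k$ therefore governs the number of zeros of $f_{k,2^k+1}$ at the origin: this is precisely the $1$-to-$1$ versus $2$-to-$1$ dichotomy we must prove.

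The core of the argument is then a counting/injectivity analysis. I would show that for $u\in\GF n$, the fiber $f_{k,2^k+1}^{-1}(u)$ has constant size (either always $1$ or always $2$) by exploiting the homogeneity structure: the relation $T_k(X)^{2^k+1} = u\,X^{2^k}$ can be attacked by substituting $X = t\cdot(\text{something})$ to reduce to a norm/trace condition, or more cleanly by composing with the known bijectivity of $X\mapsto X^{2^k+1}$ when $\gcd(k,n)=1$ and $n$ odd. The cleanest route is to factor the claim through the identity $T_k(X)^2+T_k(X)=L(X)$: when $k$ is odd, I would argue $f_{k,2^k+1}$ is a bijection by verifying it is injective (an equal-value equation $f(X)=f(Y)$ reduces, via the linearized structure of $T_k$ and $L$, to $X=Y$ because the only obstruction lives in $\ker L=\GF{}$ and is killed by odd parity), and since $\GF n$ is finite, injective implies surjective. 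When $k$ is even, the same reduction produces exactly one nontrivial coincidence coming from $T_k(1)=0$, yielding the $2$-to-$1$ behavior.

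The hard part will be controlling the degree-$(2^k+1)$ numerator precisely enough to rule out extra collisions: showing that $f_{k,2^k+1}(X)=f_{k,2^k+1}(Y)$ has \emph{exactly} the claimed number of solutions, rather than merely at least or at most that many, requires either a careful resultant/substitution computation or a clean conceptual pairing of fibers with those of the additive map $L$ and the power map $X\mapsto X^{2^k+1}$. I expect the decisive step to be establishing that the preimage equation, after clearing denominators and using $T_k(X)^2=T_k(X)+L(X)$, collapses to a linearized equation over $\GF n$ whose solution count is dictated solely by $\dim_{\GF{}}\ker L$ together with the parity correction at $X\in\{0,1\}$; once that linear-algebraic reduction is in place, the odd/even dichotomy follows immediately from whether $T_k(1)=k\bmod 2$ vanishes.
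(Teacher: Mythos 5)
Your preliminary observations are correct as far as they go: with $d=2^k+1$ one indeed has $c=1$ and $f_{k,2^k+1}(X)=T_k(X)^{2^k+1}/X^{2^k}$; the identity $T_k(X)^2+T_k(X)=X^{2^k}+X$ holds; $\gcd(k,n)=1$ forces $\ker\left(X\mapsto X^{2^k}+X\right)=\GF{}$ on $\GF n$; and your computation of the fiber over $0$ (using $T_k(1)=k\bmod 2$) correctly exhibits the odd/even dichotomy \emph{at that single fiber}. But the heart of the theorem --- injectivity for odd $k$, exact $2$-to-$1$-ness for even $k$ --- is asserted, not proved, and the proposed mechanism cannot deliver it. The collision equation $f_{k,2^k+1}(X)=f_{k,2^k+1}(Y)$ clears denominators to $T_k(X)^{2^k+1}Y^{2^k}=T_k(Y)^{2^k+1}X^{2^k}$, which has degree roughly $2^{2k}$ and is genuinely nonlinear: the exponent $2^k+1$ is not a power of $2$, so no amount of rewriting via $T_k(X)^2=T_k(X)+L(X)$ can ``collapse'' it to a linearized ($\GF{}$-additive) equation, and there is no multiplicative homogeneity to exploit either, since $T_k$ is additive rather than multiplicative. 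The odd-$k$ claim is in fact the Cohen--Matthews theorem that $f_{k,2^k+1}$ is exceptional, a deep result whose known proofs (Cohen--Matthews \cite{COHEN94}; alternatively Dillon--Dobbertin \cite{DILLON2004}, which via the relation $f_{k,q+1}(x+x^2)=Q_{k,k^\prime}^\prime\left(x+x^{q}\right)$, equation (\ref{eq:DillonDobbertin}) of this paper, is equivalent to the permutation property of the rational function $Q_{k,k^\prime}^\prime$) require substantial machinery; a self-contained linear-algebraic argument of the kind you sketch is not available, and your own text concedes the decisive step (``the hard part will be\dots'') without supplying it.

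For comparison, the paper makes no attempt at a from-scratch proof: for odd $k$ it simply cites \cite{COHEN94}, and for even $k$ it reduces to Theorem 10 of \cite{DILLON2004} (that $f_{k,1}$ is $2$-to-$1$) by means of the identity $f_{k,2^k+1}\left(x^{2^k+1}\right)=f_{k,1}(x)^{2^k+1}$ together with $\gcd\left(2^k+1,2^n-1\right)=1$ --- which holds because $k$ even and $\gcd(k,n)=1$ force $n$ odd --- so that $x\mapsto x^{2^k+1}$ is a bijection of $\GF n$ transporting fibers exactly. If you want to repair your proposal, keep your zero-fiber computation as a consistency check, but replace the ``linearized collapse'' step either by these citations or by the reduction through equation (\ref{eq:DillonDobbertin}) to the Dillon--Dobbertin permutation result; as written, the central claim has a genuine gap.
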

\begin{proof}
  For odd $k$, see \cite{COHEN94}. When $k$ is even, $f_{k,2^k+1}$ is
  not a permutation of $\GF n$. Indeed, Theorem 10 of
  \cite{DILLON2004} states that $f_{k,1}$ is $2$-to-$1$, and then the statement follows
  from the facts that $f_{k,2^k+1}(x^{2^k+1})=f_{k,1}(x)^{2^k+1}$ and
  $\gcd(2^k+1, 2^n-1)=1$ when $\gcd(k,n)=1$.
\end{proof}

We will exploit a recent polynomial identity involving Dickson
polynomials established in \cite[Theorem 2.2]{BLUHER2016}.
\begin{theorem}
  \label{thm:identity_dickson}
  In the polynomial ring $\GF[2^k]{}[X,Y]$, we have the identity
  \begin{equation*}
    X^{2^{2k}-1}+\left(\sum_{i=1}^kY^{2^k-2^i}\right)X^{2^k-1}+Y^{2^k-1} = \prod_{w\in\GF[2^k]{}^{*}} \left(D_{2^k+1}(wX)
    -Y\right).
  \end{equation*}
\end{theorem}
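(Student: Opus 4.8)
The plan is to regard both sides as elements of the unique factorization domain $\GF[2^k]{}[X,Y]$; write $q=2^k$ and $S(Y):=\sum_{i=1}^{k}Y^{q-2^i}$, so the left-hand side is $L:=X^{q^2-1}+S(Y)X^{q-1}+Y^{q-1}$ and the right-hand side is $R:=\prod_{w\in\GF[2^k]{}^{*}}\bigl(Y-D_{q+1}(wX)\bigr)$ (here $-1=1$). First I would note that $L$ and $R$ are both monic of degree $q^2-1$ in $X$: this is immediate for $L$, and for $R$ it follows because $D_{q+1}(wX)$ has leading coefficient $w^{q+1}=w^{2}$ and $\prod_{w\in\GF[2^k]{}^{*}}w^{2}=1$. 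It therefore suffices to prove that each factor $Y-D_{q+1}(wX)$ divides $L$, since these factors are pairwise coprime — being monic of degree one in $Y$ they are irreducible, and distinct $w$ yield non-associate irreducibles as the polynomials $D_{q+1}(wX)$ are pairwise distinct — so their product $R$ will divide $L$, whence $L=R$ by comparing monic polynomials of equal $X$-degree.

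Because $Y-D_{q+1}(wX)$ is monic of degree one in $Y$, divisibility is equivalent to the vanishing of the image of $L$ under the evaluation homomorphism $Y\mapsto D_{q+1}(wX)$. Setting $v:=wX$ and using $w^{q-1}=1$ (so that $X^{q-1}=v^{q-1}$ and $X^{q^2-1}=v^{q^2-1}$), the parameter $w$ disappears and the whole problem collapses to the single identity $v^{q^2-1}+S(Y)\,v^{q-1}+Y^{q-1}=0$ in $\GF[2^k]{}[v]$, now with $Y:=D_{q+1}(v)$.

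The structural fact I would isolate as a lemma is $D_{q-1}(u)=u\,S(u)$. It can be proved by induction on $k$ from the Dickson product rule $D_m(u)D_{m-1}(u)=D_{2m-1}(u)+D_{1}(u)$ and the relation $D_{2^{k-1}}(u)=u^{2^{k-1}}$, which together give the recurrence $D_{q-1}(u)=u^{q/2}D_{q/2-1}(u)+u$, exactly the recurrence satisfied by $u\,S(u)$. Granting the lemma, I multiply the reduced identity by the nonzero element $Y=D_{q+1}(v)$ and replace $Y\,S(Y)$ by $D_{q-1}(Y)$; invoking the composition law $D_{q-1}(D_{q+1}(v))=D_{q^2-1}(v)$ and the squaring rule $D_m(v)^2=D_{2m}(v)$ iterated to $D_{q+1}(v)^{q}=D_{q^2+q}(v)$, the goal becomes the purely Dickson-theoretic identity $D_{q+1}(v)\,v^{q^2-1}+D_{q^2-1}(v)\,v^{q-1}+D_{q+1}(v)^{q}=0$.

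This last identity I would settle through the substitution $v=t+t^{-1}$, under which each Dickson term linearizes via $D_m(t+t^{-1})=t^{m}+t^{-m}$. After multiplying by $t^{q^2+q}$ to clear denominators every summand becomes a perfect square in $t$, so I may extract a square root — legitimate in characteristic two, where $(P+Q+R)^2=P^2+Q^2+R^2$ — reducing everything to $(t^{q+1}+1)(t+1)^{q^2-1}+(t^{q^2-1}+1)(t+1)^{q-1}t+(t^{q+1}+1)^{q}=0$. I expect this terminal polynomial identity in $t$ to be the main obstacle. I would dispatch it by factoring out the common divisor $t^{q+1}+1$ (which divides $t^{q^2-1}+1$ since $q+1\mid q^2-1$) and checking that the resulting quotient vanishes, either by a direct expansion in powers of $t+1$ or by a short induction on $k$. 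Everything else is bookkeeping; the genuine content sits in the lemma $D_{q-1}=u\,S(u)$ and in this final $t$-identity.
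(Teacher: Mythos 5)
Your proof is correct, but there is nothing inside the paper to compare it against: the paper does not prove Theorem~\ref{thm:identity_dickson} at all, importing it verbatim from Bluher \cite[Theorem 2.2]{BLUHER2016}, so your argument supplies a self-contained proof of what is otherwise a black-box citation. Every step you outline checks out: both sides are monic in $X$ of degree $q^2-1$ (indeed $\prod_{w\in{\mathbb F}_q^*}w^{q+1}=\prod_{w}w^2=1$); the factors $Y+D_{q+1}(wX)$ are monic linear in $Y$, hence irreducible, and pairwise non-associate since their $X$-leading coefficients $w^2$ are pairwise distinct; the substitution $v=wX$ with $w^{q-1}=1$ does eliminate $w$ and reduces everything to one identity in ${\mathbb F}_{2^k}[v]$; your lemma $D_{q-1}(u)=uS(u)$ is true, and your induction closes because $u^{q/2}\cdot uS_{k-1}(u)+u=\sum_{i=1}^{k-1}u^{q-2^i+1}+u^{q-2^k+1}=uS_k(u)$; and the moves $YS(Y)=D_{q-1}(Y)$, $D_{q-1}\circ D_{q+1}=D_{q^2-1}$, $D_{q+1}^{\,q}=D_{q^2+q}$, the substitution $v=t+t^{-1}$, clearing $t^{q^2+q}$, and the square-root extraction are all legitimate in characteristic $2$. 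The one thing you leave unverified --- the terminal identity --- does hold, and more simply than you anticipate: no factoring out of $t^{q+1}+1$ and no induction on $k$ is needed. Since $q-1$ and $q^2-1$ are all-ones in binary, Lucas' theorem gives $(t+1)^{q^2-1}=\sum_{j=0}^{q^2-1}t^j$ and $(t+1)^{q-1}=\sum_{j=0}^{q-1}t^j$, whence
\begin{align*}
(t^{q+1}+1)(t+1)^{q^2-1}&=\sum_{j=q^2}^{q^2+q}t^j+\sum_{j=0}^{q}t^j,\\
(t^{q^2-1}+1)\,t\,(t+1)^{q-1}&=\sum_{j=q^2}^{q^2+q-1}t^j+\sum_{j=1}^{q}t^j,
\end{align*}
because in each product the middle range of exponents cancels; the sum of the two right-hand sides is $t^{q^2+q}+1=(t^{q+1}+1)^q$, which is exactly your third summand, and the identity follows. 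With this three-line computation inserted your proof is complete, and what it buys over the paper's treatment is self-containedness: the only inputs are the elementary Dickson facts $D_mD_{m-1}=D_{2m-1}+D_1$, $D_{2m}=D_m^2$, $D_{mn}=D_m\circ D_n$ (parameter $1$, characteristic $2$) together with Lucas' theorem, whereas the paper delegates the entire statement to an external preprint.
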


Finally we remark that the identity  by Abhyankar, Cohen, and Zieve
\cite[Theorem 1.1]{AbhyankarCohenZieve2000} tantalizingly similar to
this identity treats any characteristic, while this identity is
special to characteristic 2 (this may happen because the Dickson
polynomials are ramified at the prime 2). However, the
Abhyankar-Cohen-Zieve identity has not lead us to solving
$P_a(x)=0$.
\section{Solving $P_a(x)=0$}
Throughout this section, $k$ and $n$ are coprime and we set
$q=2^k$.
\subsection{Splitting the problem}
\label{sec:preliminaries}

One core of our approach is to exploit Theorem
\ref{thm:identity_dickson} to the study of zeros in $\GF n$ of
$P_a$. To this end, we observe firstly that
\begin{equation*}
  \begin{split}
    &X^{q^2-1}+\left(\sum_{i=1}^kY^{q-2^i}\right)X^{q-1}+Y^{q-1}\\
    &\qquad = X^{q^2-1} + Y^{q} T_k\left(\frac
    1Y\right)^2\,X^{q-1}+Y^{q-1}.
  \end{split}
\end{equation*}
Substituting $tx$ to $X$ in the above identity with
$t^{q^2-q}=Y^{q}T_k\left(\frac 1Y\right)^2$, we get
\begin{equation*}
  \begin{split}
    &X^{q^2-1}+\left(\sum_{i=1}^kY^{q-2^i}\right)X^{q-1}+Y^{q-1}\\
    &\qquad= Y^{q}T_k\left(\frac 1Y\right)^2t^{q-1}\left(x^{q^2-1}+x^{q-1} +
      \frac{1}{YT_k\left(\frac 1Y\right)^{2}t^{q-1}}\right).\\
  \end{split}
\end{equation*}
Now, $t^{q^2-q}=Y^{q}T_k\left(\frac 1Y\right)^2$ is equivalent to
$t^{q-1}=YT_k\left(\frac 1Y\right)^{\frac 2q}$. Therefore
\begin{equation*}
  \begin{split}
    YT_k\left(\frac 1Y\right)^{2}t^{q-1} = Y^{2}T_k\left(\frac 1Y\right)^{\frac{2(q+1)}q} = \left(f_{k,q+1}\left(\frac 1Y\right)\right)^{\frac 2q}.
  \end{split}
\end{equation*}
By all these calculations, we get
\begin{equation}
  \label{eq:A}
  \begin{split}
    &x^{q^2-1}+x^{q-1} + \frac{1}{ \left(f_{k,q+1}\left(\frac 1Y\right)\right)^{\frac 2q}} \\
    &\qquad= \frac{1}{Y^{q-1} \left(f_{k,q+1}\left(\frac 1Y\right)\right)^{\frac 2q}}\left(X^{q^2-1}+\left(\sum_{i=1}^kY^{q-2^i}\right)X^{q-1}+Y^{q-1}\right).
  \end{split}
\end{equation}

If $k$ is odd, $f_{k,q+1}$ is a permutation polynomial of $\GF n$ by
Theorem~\ref{thm:MCM_permutation}. Therefore, for any $a\in\GF n^*$,
there exists a unique $Y$ in $\GF n^*$ such that
$a=\frac{1}{f_{k,q+1}\left(\frac 1Y\right)^{\frac 2q}}$. Hence, by
Theorem~\ref{thm:identity_dickson} and equation~(\ref{eq:A}), we
have
\begin{equation}
  \label{eq:Dickson1}
  P_a\left(x^{q-1}\right) = x^{q^2-1}+x^{q-1}+a = \frac{1}{Y^{q-1} \left(f_{k,q+1}\left(\frac 1Y\right)\right)^{\frac 2q}}\prod_{w\in\GF[q]{}^*}\left(D_{q+1}\left(wtx\right)-Y\right)
\end{equation}
where $Y$ is the unique element of $\GF n^*$ such that
$a=\frac{1}{f_{k,q+1}\left(\frac 1Y\right)^{\frac 2q}}$ and
$t^{q-1}=YT_k\left(\frac 1Y\right)^{\frac 2q}$. Now, since
$\gcd(q-1,2^n-1)=1$, the zeros of $P_a(x)$ are the images of the
zeros of $P_a(x^{q-1})$ by the map $x\mapsto x^{q-1}$. Therefore,
when $k$ is odd, equation~(\ref{eq:Dickson1}) states that finding
the zeros of $P_a(x^{q-1})$ amounts to determine preimages of $Y$
under the Dickson polynomial $D_{q+1}$.

When $k$ is even, $f_{k,q+1}$ is no longer a permutation and we
cannot repeat again the preceding argument (indeed, when $k$ is
even, $f_{k,q+1}$ is $2$-to-$1$, see
Theorem~\ref{thm:MCM_permutation}). Fortunately, we can go back to
the odd case by rewriting the equation. Indeed, for $x\in \GF{n}$,
\begin{equation*}
  \begin{split}
    P_a(x) &= x^{2^k+1}+x+a = \left(x^{2^{n-k}+1}+x^{2^{n-k}}+a^{2^{n-k}}\right)^{2^k}\\
    &=  \left((x+1)^{2^{n-k}+1}+(x+1)+a^{2^{n-k}}\right)^{2^k}
  \end{split}
\end{equation*}  and so
\begin{equation}
    \label{eq:even_case}
    \{x\in\GF n\mid P_a(x) = 0\} = \left\{x + 1\mid x^{2^{n-k}+1}+x+a^{2^{n-k}}=0,\,x\in\GF n\right\}.
  \end{equation}
If $k$ is even, then $n$ is odd as $\gcd(k,n)=1$, and so $n-k$ is
odd and we can reduce to the odd case.

We now summarize all the above discussions in the following theorem.
\begin{theorem}
  \label{thm:main}
  Let $k$ and $n$ be two positive integers such that $\gcd(k,n)=1$.
  \begin{enumerate}
  \item\label{oddcase} Let $k$ be odd and $q=2^k$. Let
    $Y\in\GF n^*$ be (uniquely) defined by
    $a=\frac{1}{f_{k,q+1}\left(\frac 1Y\right)^{\frac 2q}}$. Then,
    \begin{equation*}
      \{x\in\GF n\mid P_a(x) = 0\} = \left\{\frac{z^{q-1}}{YT_k\left(\frac 1Y\right)^{\frac 2q}}\,|\, D_{q+1}(z) = Y,\, z\in\GF n\right\}.
    \end{equation*}
  \item Let $k$ be even and $q^\prime=2^{n-k}$. Let $Y^\prime\in\GF n^*$ be (uniquely) defined by
    $a^{q^\prime}=\frac{1}{f_{n-k,q^\prime+1}\left(\frac
        1{Y^\prime}\right)^{\frac 2{q^\prime}}}$. Then,
    \begin{equation*}
      \{x\in\GF n\mid P_a(x) = 0\} = \left\{1+\frac{z^{q^\prime-1}}{Y^\prime T_{n-k}\left(\frac 1{Y^\prime}\right)^{\frac 2{q^\prime}}}\,|\, D_{q^\prime+1}(z) = Y^\prime,\, z\in\GF n\right\}.
    \end{equation*}
  \end{enumerate}
\end{theorem}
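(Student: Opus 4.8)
The statement synthesizes the computations preceding it, so the plan is to turn identity~(\ref{eq:Dickson1}) into a clean bijection over the algebraic closure and then descend to $\GF n$. I would treat the odd-$k$ case (item~\ref{oddcase}) first; item~(2) then follows formally from it and~(\ref{eq:even_case}).

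Write $c=YT_k\!\left(\frac1Y\right)^{\frac2q}=t^{q-1}$ and let $\lambda$ be the nonzero scalar standing in front of the product in~(\ref{eq:Dickson1}). Since the left-hand side of the identity of Theorem~\ref{thm:identity_dickson} depends on $X$ only through $X^{q-1}$, that product equals $G(X^{q-1})$ for the single-variable polynomial
\begin{equation*}
  G(V)=V^{q+1}+\Bigl(\sum_{i=1}^{k}Y^{q-2^i}\Bigr)V+Y^{q-1},\qquad \deg G=q+1,\quad G(0)=Y^{q-1}\neq0 .
\end{equation*}
Substituting $X=tx$ and using $t^{q-1}=c$, identity~(\ref{eq:Dickson1}) reads as the polynomial identity $P_a(u)=\lambda\,G(cu)$ in $u=x^{q-1}$. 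As $u\mapsto cu$ is a linear bijection, $P_a$ and $G$ are separable simultaneously; since $P_a$ has no multiple roots, $G$ has $q+1$ distinct, nonzero roots.

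I would then build the bijection. For any root $z$ of $D_{q+1}(z)=Y$ and any $w\in\GF[q]{}^*$ one has $(wz)^{q-1}=z^{q-1}$ because $w^{q-1}=1$; hence $z\mapsto z^{q-1}$ sends roots of $D_{q+1}(\cdot)=Y$ to roots $V$ of $G$, the associated zero of $P_a$ being $z^{q-1}/c$. A degree count finishes the bookkeeping: $G(X^{q-1})$ has $q^2-1$ distinct roots in $X$ (the $q+1$ fibres $X^{q-1}=V_j$, each of size $q-1$ since $V_j\neq0$), so the product is separable in $X$, $D_{q+1}(\cdot)=Y$ has exactly $q+1$ distinct roots, and $z\mapsto z^{q-1}$ is a bijection from them onto the roots of $G$. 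Thus, over the algebraic closure, the zeros of $P_a$ are exactly $\{\,z^{q-1}/c\mid D_{q+1}(z)=Y\,\}$, in bijection with the preimages of $Y$ under $D_{q+1}$.

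The main obstacle is descending to $\GF n$: I must show that a zero $u=z^{q-1}/c$ lies in $\GF n$ if and only if its partner $z$ does. One direction is immediate, as $c\in\GF n$. For the converse, assume $u\in\GF n$; then $z^{q-1}=cu\in\GF n$, so $\left(z^{2^n}/z\right)^{q-1}=1$ and $z^{2^n}=wz$ for some $w\in\GF[q]{}^*$. Since $D_{q+1}$ has coefficients in $\mathbb F_2$ and $Y\in\GF n$, the Frobenius $z\mapsto z^{2^n}$ permutes the roots of $D_{q+1}(\cdot)=Y$; thus $wz$ is again such a root, and $(wz)^{q-1}=z^{q-1}$ forces $wz=z$ by the injectivity just established, i.e.\ $w=1$ and $z\in\GF n$. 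This yields item~\ref{oddcase}. For item~(2), $k$ even forces $n$ odd, hence $n-k$ is odd with $\gcd(n-k,n)=1$; applying item~\ref{oddcase} with $(k,q,a)$ replaced by $\bigl(n-k,\,q'=2^{n-k},\,a^{q'}\bigr)$ solves $x^{q'+1}+x+a^{q'}=0$, and translating the solution set by $1$ via~(\ref{eq:even_case}) gives the claimed formula.
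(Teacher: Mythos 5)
Your proposal is correct and follows essentially the same route as the paper: both read the result off the specialization \eqref{eq:Dickson1} of Theorem~\ref{thm:identity_dickson} for odd $k$, and reduce the even-$k$ case to the odd one via \eqref{eq:even_case}. If anything, your write-up is more careful than the paper's two-line proof: the separability/degree count showing $z\mapsto z^{q-1}$ is a bijection from the $q+1$ roots of $D_{q+1}(\cdot)=Y$ onto the roots of $G$, together with the Frobenius-descent argument that a zero $z^{q-1}/c\in\GF{n}$ forces $z\in\GF{n}$, makes precise the rationality of the preimages $z$, which the paper's proof (setting $z=wtx$ with $w\in\GF[q]{}^*$ and $t^{q-1}=YT_k\left(\frac 1Y\right)^{\frac 2q}$, neither factor lying in $\GF{n}$ in general) leaves implicit.
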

\begin{proof}
  Suppose that $k$ is odd.
  Equation~(\ref{eq:Dickson1}) shows that the zeros of $P_a$ in $\GF n$
  are $x^{q-1}$ for the elements $x\in\GF n^\star$ such that $D_{q+1}(wtx)=Y$ where $t^{q-1}=YT_k\left(\frac 1Y\right)^{\frac 2q}$.
  Set $z=wtx$. Then, since
  $w\in\GF[q]{}^*$, $x^{q-1}=\left(\frac{z}{wt}\right)^{q-1}=\frac{z^{q-1}}{t^{q-1}}=\frac{z^{q-1}}{YT_k\left(\frac 1Y\right)^{\frac 2q}}$.
   Item 2 follows from Item \ref{oddcase} and equality
   \eqref{eq:even_case}.
\end{proof}
Theorem~\ref{thm:main} shows that we can split the problem of
finding the zeros in $\GF n$ of $P_a$ into two independent problems
with odd $k$.
\begin{problem}
  \label{MCMProblem} For $a\in\GF n^*$, find the unique element
  $Y$ in $\GF n^*$ such that
  \begin{equation}\label{eq:ProblemA}
    a^{\frac q2} = \frac{1}{f_{k,q+1}\left(\frac 1Y\right)}.
  \end{equation}
\end{problem}
\begin{problem}
  \label{DicksonProblem}
  For $Y\in\GF n^*$, find the preimages in $\GF n$ of $Y$ under
  the Dickson polynomial $D_{q+1}$, that is, find the elements of the
  set
  \begin{equation}\label{eq:ProblemB}
  D_{q+1}^{-1}(Y) = \{z\in\GF n^\star\mid D_{q+1}(z)= Y\}.
  \end{equation}
\end{problem}
In the following two subsections, we shall study those two problems
only when $k$ is odd since, if $k$ is even, it suffices to replace
$k$ by $n-k$, $q$ by $q^\prime=2^{n-k}$, and $a$ by $a^{q^\prime}$
in all the results of the odd case.

\subsection{On problem~\ref{MCMProblem}}
\label{sec:probl-refmcmpr}

Define
\begin{equation}
  \label{eq:Qkk'}
  Q_{k,k^\prime}^\prime(x) =\frac{x^{q+1}}{\sum_{i=1}^{k^\prime} x^{q^i}}
\end{equation}
where $k^\prime < 2n$ is the inverse of $k$ modulo $2n$, that is,
s.t. $kk^\prime=1\mod 2n$. Note that $k^\prime$ is odd since
$\gcd(k^\prime, 2n)=1$. It is known that if $\gcd(2n,k)=1$ and
$k^\prime$ is odd, then $Q_{k,k^\prime}^\prime$ is permutation on
$\GF {2n}$ (see \cite{DILLON2004} or \cite{DILLON99} where
$Q_{k,k^\prime}=1/Q_{k,k^\prime}^\prime$ is instead considered).
Indeed, due to \cite{DILLON2004}, defining the following sequences
of polynomials
  \[
    A_1(x)=x,\,
    A_2(x)=x^{q+1},\,
    A_{i+2}(x)=x^{q^{i+1}} A_{i+1}(x)+x^{q^{i+1}-q^i}A_i(x), \quad i\geq 1,
  \]

  \[
    B_1(x)=0,\,
    B_2(x)=x^{q-1},\,
    B_{i+2}(x)=x^{q^{i+1}} B_{i+1}(x)+x^{q^{i+1}-q^i}B_i(x), \quad i\geq
    1,
  \]
  then the polynomial expression of the inverse $R_{k,k^\prime}$ of the mapping induced by $Q_{k,k^\prime}^\prime$ on $\GF{2n}$ is
\begin{equation}
  \label{eq:Rkk'}
  R_{k,k'}(x)=\sum_{i=1}^{k'}A_i(x)+B_{k'}(x).
\end{equation}

 Directively from the definitions, it follow
 \[f_{k,q+1}(x+x^2) = \frac{(x+x^q)^{q+1}}{x^q+x^{2q}}\]
and
 \[ Q_{k,k^\prime}^\prime\left(x+x^{q}\right)= \frac{(x+x^q)^{q+1}}{x^q+x^{q^{k^\prime+1}}}.\]
Since $x^{2q}=x^{q^{k^\prime+1}}\Longleftrightarrow
x=x^{2^{kk^\prime-1}}$, it holds that
\begin{equation}
  \label{eq:DillonDobbertin}
  f_{k,q+1}(x+x^2) =  Q_{k,k^\prime}^\prime\left(x+x^{q}\right)
\end{equation}
for any $x\in \GF{2n}$. Let $x$ be an element of $\GF{2n}$ such that
\[\frac{1}{Y}=x+x^2.\]  By using
\eqref{eq:DillonDobbertin} we can rewrite \eqref{eq:ProblemA} as
 \[ a^{-\frac q2} =Q_{k,k^\prime}^\prime\left(x+x^{q}\right).\]

 Therefore, we
have
\begin{proposition}
  \label{thm:MCMLinearEquation}
  Let $a\in\GF n^*$. Let
  $x\in\GF{2n}$ be a solution of
\begin{displaymath}
   R_{k,k^\prime}\left(a^{-\frac q2}\right) =x+x^{q}.  \end{displaymath}
 Then,
  $Y=\frac{1}{x+x^{2}}=\left(1+\frac{1}{x}\right)+\frac{1}{\left(1+\frac{1}{x}\right)}$ is
  the unique solution in $\GF{n}$ of
  $a^{\frac q2} = \left(f_{k,q+1}\left(\frac 1Y\right)\right)^{-1}$.
\end{proposition}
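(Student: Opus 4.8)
The plan is to run the chain of equivalences established just before the statement in reverse, turning it into a construction of the solution $Y$. Since $k$ is odd and $\gcd(k,n)=1$ we have $\gcd(k,2n)=1$, so $Q_{k,k'}^\prime$ is a permutation of $\GF{2n}$ and $R_{k,k'}$ is its genuine two-sided inverse there. First I would apply $Q_{k,k'}^\prime$ to both sides of the hypothesis $R_{k,k'}(a^{-q/2}) = x+x^q$ to get $a^{-q/2} = Q_{k,k'}^\prime(x+x^q)$. Feeding this into the Dillon--Dobbertin identity \eqref{eq:DillonDobbertin}, namely $f_{k,q+1}(x+x^2)=Q_{k,k'}^\prime(x+x^q)$, gives $f_{k,q+1}(x+x^2) = a^{-q/2}$. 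Setting $Y=\frac{1}{x+x^2}$ this reads $f_{k,q+1}(\frac1Y)=a^{-q/2}$, i.e. $a^{q/2}=\frac{1}{f_{k,q+1}(1/Y)}$, which is exactly \eqref{eq:ProblemA}. Along the way I would record that $x\notin\{0,1\}$: otherwise $x+x^q=0$, so $R_{k,k'}(a^{-q/2})=0=R_{k,k'}(0)$ and injectivity of $R_{k,k'}$ forces $a^{-q/2}=0$, impossible for $a\in\GF n^*$; hence $x+x^2\neq 0$ and $Y$ is a well-defined nonzero element. The second expression for $Y$ is then just the algebraic identity $\frac{1}{x+x^2}=\frac{x+1}{x}+\frac{x}{x+1}=(1+\frac1x)+\frac{1}{1+1/x}$.

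The step carrying the real content, and the one I expect to be the main obstacle, is showing that this $Y$ lies in $\GF n$ and not merely in $\GF{2n}$, since $x$ is only known to live in $\GF{2n}$. For this I would use a Frobenius descent. Put $w=x+x^q=R_{k,k'}(a^{-q/2})$. Because $R_{k,k'}$ is assembled solely from $\GF{}$-linear combinations and $2^k$-power maps, both of which preserve $\GF n$, and $a^{-q/2}\in\GF n$, the value $w$ lies in $\GF n$; thus $w^{2^n}=w$, which reads $x^{2^n}+(x^{2^n})^q=x+x^q$. Hence $x$ and $x^{2^n}$ have the same image under the $\GF{}$-linear map $y\mapsto y+y^q$, whose kernel is $\GF{\gcd(k,2n)}=\GF{}=\{0,1\}$. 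Therefore $x^{2^n}=x$ or $x^{2^n}=x+1$, and in either case substituting into $x+x^2$ gives $(x+x^2)^{2^n}=x+x^2$. So $x+x^2\in\GF n$ and $Y\in\GF n^*$, as required.

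Finally, uniqueness is immediate: as $k$ is odd, $f_{k,q+1}$ is a permutation of $\GF n$ by Theorem~\ref{thm:MCM_permutation}, so \eqref{eq:ProblemA} determines $\frac1Y$, and hence $Y$, uniquely in $\GF n^*$. The element produced above is therefore that unique solution.
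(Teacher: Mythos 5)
Your proposal is correct and follows essentially the same route as the paper, which derives the proposition from the Dillon--Dobbertin identity \eqref{eq:DillonDobbertin} together with the fact that $R_{k,k^\prime}$ inverts the permutation $Q_{k,k^\prime}^\prime$ on $\GF{2n}$; you simply run that chain in the reverse direction and invoke the permutation property of $f_{k,q+1}$ for uniqueness, exactly as the paper intends. Your added verifications --- that $x\notin\GF{}$ so $Y$ is well defined, and the Frobenius-descent argument (valid because $R_{k,k^\prime}$ has coefficients in $\GF{}$ and hence commutes with $y\mapsto y^{2^n}$, with $\ker(y\mapsto y+y^q)=\GF{}$ since $\gcd(k,2n)=1$) showing $Y\in\GF n$ --- are sound and fill in details the paper leaves implicit.
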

Proposition~\ref{thm:MCMLinearEquation} shows that solving
Problem~\ref{MCMProblem} amounts to find a solution of an affine
equation $x+x^q=b$, which is done in the following.
\begin{proposition}
  \label{MCM:Solvingx^q+x=b}
  Let $k$ be odd and $\gcd(n,k)=1$.  Then, for any
  $b\in\GF n$,
  \begin{displaymath}
    \{x\in\GF {2n}\mid
    x+x^q=b\}=S_{n,k}\left(\frac{b}{\zeta+1}\right)+\GF{},
  \end{displaymath}
  where $S_{n,k}(x)=\sum_{i=0}^{n-1}x^{q^{i}}$ and $\zeta$ is an element of $\mu_{2^n+1}^{\star}$.
\end{proposition}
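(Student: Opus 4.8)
The plan is to solve the affine equation $x+x^q=b$ over $\GF{2n}$ by exploiting the fact that $k$ is coprime to $2n$ (since $k$ is odd and $\gcd(n,k)=1$), so that the Frobenius-type map $x\mapsto x^q=x^{2^k}$ generates enough iterates to invert the linear operator $L(x):=x+x^q$ explicitly. First I would observe that $L$ is an $\GF{}$-linear map on $\GF{2n}$, so its kernel and the structure of its solution set are governed by the order of $q=2^k$ modulo the relevant exponent. Since $\gcd(k,2n)=1$, the residues $k\cdot 0,k\cdot 1,\dots,k\cdot(2n-1)$ run through all of $\mathbb Z/2n\mathbb Z$, so $q^{2n}=x$ for all $x\in\GF{2n}$ but no smaller power of the map $x\mapsto x^q$ is the identity on a generator; this tells us $\ker L$ consists exactly of the elements fixed by $x\mapsto x^q$ composed appropriately, which I expect to pin down as $\GF{}$ (the prime field), matching the ``$+\,\GF{}$'' in the statement.

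The core computation is to verify that $S_{n,k}\!\left(\frac{b}{\zeta+1}\right)$ is a particular solution, where $S_{n,k}(x)=\sum_{i=0}^{n-1}x^{q^i}$ and $\zeta\in\mu_{2^n+1}^{\star}$. The key step I would carry out is to apply $L$ to the candidate and telescope: writing $y=\frac{b}{\zeta+1}$, one has
\begin{equation*}
  L\big(S_{n,k}(y)\big)=\sum_{i=0}^{n-1}y^{q^i}+\sum_{i=0}^{n-1}y^{q^{i+1}}=y+y^{q^n}=y\big(1+\zeta^{?}\big),
\end{equation*}
since the inner sum telescopes and only the $i=0$ and $i=n$ terms survive. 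The decisive point is then to compute $y^{q^n}=y^{2^{kn}}$ and exploit that, because $\zeta\in\mu_{2^n+1}$, conjugation by $q^n=2^{kn}$ (with $kn$ odd, as $k$ and $n$ are both odd in the relevant case) sends $\zeta$ to $\zeta^{2^n}=\zeta^{-1}$; this is exactly where the choice $\zeta\in\mu_{2^n+1}^{\star}$ and the factor $\zeta+1$ in the denominator are engineered so that $y+y^{q^n}$ collapses to $b$. I would verify this using $\big(\frac{b}{\zeta+1}\big)^{q^n}=\frac{b}{\zeta^{-1}+1}=\frac{b\,\zeta}{1+\zeta}$ (using $b\in\GF n$ so $b^{q^n}=b$), whence $y+y^{q^n}=\frac{b}{1+\zeta}(1+\zeta)=b$, confirming $S_{n,k}\!\left(\frac{b}{\zeta+1}\right)$ is a solution.

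Having produced one solution, the full solution set is that particular solution plus $\ker L=\{x\in\GF{2n}\mid x+x^q=0\}$. It remains to identify this kernel as $\GF{}=\{0,1\}$: the condition $x=x^q=x^{2^k}$ means $x$ lies in $\GF[2]{\gcd(k,2n)}=\GF[2]{1}=\GF{}$, again invoking $\gcd(k,2n)=1$. The hard part of the argument is precisely the conjugation identity $\zeta^{q^n}=\zeta^{-1}$: one must argue carefully that $2^{kn}\equiv 2^n\pmod{2^n+1}$ on the cyclic group $\mu_{2^n+1}$, which requires tracking the parity of $kn$ and reducing the exponent modulo the order $2n$ of $2$ in $(\mathbb Z/(2^n+1))^\times$. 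Everything else — the $\GF{}$-linearity, the telescoping, and the kernel computation — is routine once this number-theoretic reduction is secured. I would close by noting that $b\in\GF n$ guarantees $b^{q^n}=b$, which is what makes $b$ rather than some conjugate reappear at the end.
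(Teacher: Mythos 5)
Your proposal is correct and takes essentially the same route as the paper: the paper likewise verifies that $S_{n,k}\left(\frac{b}{\zeta+1}\right)$ is a solution via the telescoping identity $S_{n,k}(y)+S_{n,k}(y)^q=y+y^{q^n}$ together with the collapse $\zeta^{q^n}=\zeta^{-1}$ (so that $y+y^{q^n}=\frac{b}{\zeta+1}+\frac{b}{\zeta^{-1}+1}=b$), and it identifies the full solution set as this particular solution plus the kernel $\GF{}$, using $\gcd(2n,k)=1$. One repair to your justification of the key step: you write that $kn$ is odd ``as $k$ and $n$ are both odd in the relevant case,'' but $n$ may perfectly well be even here — the proposition is invoked for even $n$ as well (it feeds into Theorem~\ref{thm:maineven} through Theorem~\ref{thm:SolvingA}). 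The parity of $kn$ is not what matters; only the parity of $k$ is needed, since $k$ odd gives $kn\equiv n\pmod{2n}$, whence $2^{kn}\equiv 2^n\equiv -1\pmod{2^n+1}$ and so $\zeta^{q^n}=\zeta^{-1}$ for every $\zeta\in\mu_{2^n+1}^{\star}$, regardless of the parity of $n$. This is exactly the reduction modulo the order $2n$ of $2$ that you mention at the end, so the slip is local and the rest of your argument (the $\GF{}$-linearity, the telescoping, the kernel computation via $x=x^q\Leftrightarrow x\in\GF[2]{\gcd(k,2n)}=\GF{}$, and $b^{q^n}=b$ from $b\in\GF{n}$) stands as written.
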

\begin{proof} As it was assumed that $k$ is odd and $\gcd(n,k)=1$,
it holds $\gcd(2n,k)=1$ and so the linear mapping
$x\in\GF{2n}\longmapsto x+x^q$ has kernel of dimension 1, i.e. the
equation $x+x^q=b$ has at most 2 solutions in $\GF{2n}$.
  Since
  $S_{n,k}(x) + \left(S_{n,k}(x)\right)^{q} = x + x^{q^{n}}$, we
  have
  \begin{eqnarray*}
    S_{n,k}\left(\frac{b}{\zeta+1}\right)+\left(S_{n,k}\left(\frac{b}{\zeta+1}\right)\right)^{q} + b
    &=& \frac{b}{\zeta+1} + \left(\frac{b}{\zeta+1}\right)^{q^{n}} + b\\
    &=& \frac{b}{\zeta+1} + \frac{b}{\zeta^{q^{n}}+1} + b\\
    &=&  \frac{b}{\zeta+1} + \frac{b}{1/\zeta+1} + b\\
    &=& 0
  \end{eqnarray*}
   and thus really $S_{n,k}\left(\frac{b}{\zeta+1}\right), S_{n,k}\left(\frac{b}{\zeta+1}\right)+1\in \GF{2n}$
   are   the $\GF{2n}-$solutions of the equation $x+x^q=b$.
\end{proof}
By Proposition~\ref{thm:MCMLinearEquation} and
Proposition~\ref{MCM:Solvingx^q+x=b}, we can now explicit the
solutions of Problem~\ref{MCMProblem}.
\begin{theorem}
  \label{thm:SolvingA}  Let $a\in\GF n^*$. Let $k$ be odd with $\gcd(n,k)=1$ and $k^\prime$ be the
  inverse of $k$ modulo $2n$. Then, the unique solution of
  (\ref{eq:ProblemA}) in $\GF n^*$ is
    \begin{displaymath}
    Y = \frac{1}{S_{n,k}\left(\frac{R_{k,k^\prime}\left(a^{-\frac q2}\right)}{\zeta+1}\right)+\left(S_{n,k}\left(\frac{R_{k,k^\prime}\left(a^{-\frac q2}\right)}{\zeta+1}\right)\right)^2}
  \end{displaymath}
  where $\zeta$ denotes any element of $\GF{2n}^\star$ such that
    $\zeta^{2^n+1}=1$, $S_{n,k}(x)=\sum_{i=0}^{n-1}x^{q^{i}}$ and
    $R_{k,k^\prime}$ is defined by (\ref{eq:Rkk'}). Furthermore, we
    have $Y=T+\frac{1}{T}$ for
    \begin{displaymath}
    T=1+\frac{1}{S_{n,k}\left(\frac{R_{k,k^\prime}\left(a^{-\frac
    q2}\right)}{\zeta+1}\right)}.
  \end{displaymath}

\end{theorem}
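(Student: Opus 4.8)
The plan is to assemble the two auxiliary results already in hand, Proposition~\ref{thm:MCMLinearEquation} and Proposition~\ref{MCM:Solvingx^q+x=b}, and then substitute; no new idea is needed, since existence and uniqueness of $Y$ in $\GF n^*$ are already guaranteed by Proposition~\ref{thm:MCMLinearEquation} and the only remaining task is to make $Y$ explicit. Proposition~\ref{thm:MCMLinearEquation} tells us that for \emph{any} $x\in\GF{2n}$ satisfying the affine equation $x+x^q=R_{k,k^\prime}\left(a^{-\frac q2}\right)$, the element $Y=\frac{1}{x+x^2}$ is exactly the sought solution of (\ref{eq:ProblemA}). So I would set $b=R_{k,k^\prime}\left(a^{-\frac q2}\right)$, use Proposition~\ref{MCM:Solvingx^q+x=b} to produce the concrete solution $x=S_{n,k}\left(\frac{b}{\zeta+1}\right)$, and plug this $x$ into $Y=\frac{1}{x+x^2}$; this gives the first displayed formula directly.

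Before the substitution is legitimate I would verify the single hypothesis of Proposition~\ref{MCM:Solvingx^q+x=b}, namely that $b=R_{k,k^\prime}\left(a^{-\frac q2}\right)$ lies in $\GF n$ rather than merely in $\GF{2n}$. Since $a\in\GF n^*$ and $\frac q2=2^{k-1}$, the element $a^{-\frac q2}$ is already in $\GF n$, so it suffices to show that $R_{k,k^\prime}$ preserves $\GF n$. For this I would argue that $Q_{k,k^\prime}^\prime(x)=\frac{x^{q+1}}{\sum_{i=1}^{k^\prime}x^{q^i}}$ maps $\GF n$ into itself: for $x\in\GF n$ every power $x^{q^i}$ stays in $\GF n$, so both numerator and denominator are in $\GF n$. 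Being an injection of the finite set $\GF n$ into itself, $Q_{k,k^\prime}^\prime$ restricts to a bijection of $\GF n$, hence its inverse $R_{k,k^\prime}$ also preserves $\GF n$; therefore $b\in\GF n$ and Proposition~\ref{MCM:Solvingx^q+x=b} applies. This verification is the only genuine point in the proof and is the step I expect to cost a couple of lines; everything else is bookkeeping.

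For the \emph{furthermore} assertion I would recompute nothing but instead read off the second form of $Y$ recorded in Proposition~\ref{thm:MCMLinearEquation}, namely $Y=\left(1+\frac 1x\right)+\frac{1}{1+\frac 1x}$. Taking $T=1+\frac 1x$ with $x=S_{n,k}\left(\frac{R_{k,k^\prime}\left(a^{-\frac q2}\right)}{\zeta+1}\right)$ then gives $Y=T+\frac 1T$, which is precisely the claimed expression. I would optionally note the characteristic-$2$ identity $\left(1+\frac 1x\right)+\frac{1}{1+\frac 1x}=\frac{1}{x+x^2}$, which shows the two displayed forms of $Y$ coincide.

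Finally I would point out that the formula is well defined in spite of two apparent ambiguities, which also serves as a consistency check against the uniqueness of $Y$. Proposition~\ref{MCM:Solvingx^q+x=b} returns two solutions $x$ and $x+1$ of the affine equation, but $(x+1)+(x+1)^2=x+x^2$, so both yield the same $Y$; and different admissible $\zeta\in\mu_{2^n+1}^\star$ give solutions differing only by an element of $\GF{}$, hence again the same $Y$. Thus the closed form is independent of all choices, matching the uniqueness guaranteed by Proposition~\ref{thm:MCMLinearEquation}, and the theorem follows by direct substitution once $R_{k,k^\prime}\left(a^{-\frac q2}\right)\in\GF n$ is established.
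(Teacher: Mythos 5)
Your proposal is correct and follows essentially the same route as the paper, which states Theorem~\ref{thm:SolvingA} as an immediate combination of Proposition~\ref{thm:MCMLinearEquation} and Proposition~\ref{MCM:Solvingx^q+x=b} with $b=R_{k,k^\prime}\left(a^{-\frac q2}\right)$. Your added verifications --- that $R_{k,k^\prime}$ preserves $\GF n$ so that Proposition~\ref{MCM:Solvingx^q+x=b} applies, and that the formula is independent of the choice of solution $x$ versus $x+1$ and of $\zeta$ --- are sound and in fact supply details the paper leaves implicit.
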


\subsection{On Problem~\ref{DicksonProblem}}
\label{sec:probl-refd}
By Proposition~\ref{prop:decomposition}, one
can write $z=c+\frac{1}{c}$ where $c\in\GF{n}^\star$ or $c\in
\mu_{2^n+1}^{\star}$. Equation~(\ref{eq:Dickson}) applied to $z$
leads then to
\begin{equation}
  \label{eq:ADickson}
  D_{q+1}(z) = c^{q+1} + \frac{1}{c^{q+1}}.
\end{equation}
Thus, we can be reduced to solve firstly equation $T+\frac{1}{T}=Y$,
then equation $c^{q+1}=T$ in $\GF{n}^\star \cup
\mu_{2^n+1}^{\star}$, and set $z=c+\frac{1}{c}$. Here, let us point
out that $c^{q+1}=T$ is equivalent to $\left(\frac
1c\right)^{q+1}=\frac 1T$ and that $c$ and $\frac 1c$ define the
same element $z=c+\frac 1c$ of $\GF n$.

 Proposition~\ref{prop:decomposition} says that the equation $T+\frac 1T = Y$ has two solutions in $\GF
 n^\star$ if $\Tr{n}\left(\frac 1Y\right)=0$ and in $\mu_{2^n+1}^{\star}$ if $\Tr{n}\left(\frac
 1Y\right)=1$. In fact, Proposition~\ref{MCM:Solvingx^q+x=b} gives
 an explicit solution expression, that is,
 \begin{equation}\label{solution:T+1/T=Y}
 T=YS_{n,1}\left(\frac{1}{Y^2(\zeta+1)}\right)
 \text{ and }
 T=YS_{n,1}\left(\frac{1}{Y^2(\zeta+1)}\right)+Y,
 \end{equation}
 where
 $S_{n,1}(x)=\sum_{i=0}^{n-1}x^{2^i}$ and $\zeta$ is any element of
 $\mu_{2^n+1}^{\star}$.

Now, let us consider solutions of $c^{q+1}=T$ in $\GF{n}^\star \cup
\mu_{2^n+1}^{\star}$. First, note that if $T\in \GF{n}^\star$, then
necessarily $c\in\GF n^\star$ (indeed, if $c\in
\mu_{2^n+1}^{\star}$, we get $T^2=T\cdot T=T^{2^n}\cdot
T=T^{2^n+1}=(c^{2^n+1})^{q+1}=1$ contradicting $T\notin \GF{}$).

Recall that if $k$ is odd and $\gcd(n,k)=1$, then
\begin{equation}\label{gcd:+-}
  \gcd(q+1,2^n-1) = \begin{cases}1, & \mbox{if $n$ is odd}\\ 3, & \mbox{if $n$ is
  even}\end{cases}
\end{equation}
and
\begin{equation}\label{gcd:++}
  \gcd(q+1,2^n+1) = \begin{cases}1, & \mbox{if $n$ is even}\\ 3, & \mbox{if $n$ is odd.}\end{cases}
\end{equation}

Therefore, if $T\in \GF{n}^\star$, then there are $0$ (if $T$ is a
non-cube in $\GF{n}^\star$) or $3$ (if $T$ is a cube in
$\GF{n}^\star$) elements $c$ in $\GF n^\star$ such that $c^{q+1}=T$
when $n$ is even while there is a unique $c$ (i.e.
$T^{(q+1)^{-1}\mod 2^n-1}$) when $n$ is odd.  And, if $T\in
\mu_{2^n+1}^{\star}$, then there are $0$ (if $T$ is a non-cube in
$\mu_{2^n+1}^{\star}$) or $3$ (if $T$ is a cube in
$\mu_{2^n+1}^{\star}$) elements $c$ in $\mu_{2^n+1}^{\star}$ such
that $c^{q+1}=T$ when $n$ is odd while there is a unique $c$ (i.e.
$T^{(q+1)^{-1}\mod 2^n+1}$) when $n$ is even.

It remains to show in the case when there are three solutions $c$,
they define three different elements $z\in \GF n^\star$. Denote $w$
a primitive element of $\GF[4]{}$. Then these three solutions of
$c^{q+1}=T$ are of form $c$, $cw$ and $cw^2$. Now,
$cw_1+\frac{1}{cw_1}=cw_2+\frac{1}{cw_2}$ implies that $cw_1=cw_2$
or $cw_1=\frac{1}{cw_2}$ (because $A+\frac 1A = B + \frac 1B$ is
equivalent to $(A+B)(AB + 1) = 0$). The second case is impossible
because it implies that
$T=c^{q+1}=\left(\frac{1}{w_1^{\frac12}w_2^{\frac12}}\right)^{q+1}=1$
because $3$ divides $q+1$ when $k$ is odd.

We can thus state the following answer to Problem 2.
\begin{theorem}
  \label{even_n_odd_k}
  Let $k$ be odd and $n$ be even. Let $Y\in\GF n^*$. Let $T$ be any element of $\GF {2n}$ such that $T+\frac 1T=Y$ (this can be given by \eqref{solution:T+1/T=Y}).
  \begin{enumerate}
  \item If $T$ is a non-cube in $\GF{n}^\star$, then
   \[D_{q+1}^{-1}(Y)=\emptyset.\]
  \item If $T$ is a cube in $\GF{n}^\star$, then
  \begin{equation*}
     D_{q+1}^{-1}(Y)=\left\{cw+\frac{1}{cw}\mid c^{q+1} = T,\,c\in\GF n^\star,\,
     w\in\GF[4]{}^*\right\}.
   \end{equation*}
  \item If $T$ is not in $\GF{n}$,  then
   \begin{equation*}
     D_{q+1}^{-1}(Y)=\left\{T^{(q+1)^{-1}\mod 2^n+1}+\frac{1}{T^{(q+1)^{-1}\mod 2^n+1}}\right\}.
   \end{equation*}
 \end{enumerate}
\end{theorem}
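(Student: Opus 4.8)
The plan is to obtain the three cases directly from the parametrization set up immediately before the statement, so that the theorem is essentially a packaging of that discussion with the gcd values \eqref{gcd:+-} and \eqref{gcd:++}. First I would use Proposition~\ref{prop:decomposition} to write any candidate preimage as $z=c+\frac1c$ with $c\in\GF{n}^\star$ or $c\in\mu_{2^n+1}^\star$, and then apply \eqref{eq:ADickson} to convert $D_{q+1}(z)=Y$ into $c^{q+1}+\frac{1}{c^{q+1}}=Y$. Setting $T=c^{q+1}$, this is exactly $T+\frac1T=Y$, so $T$ is one of the two solutions of that equation. Since $c$ and $\frac1c$ give the same $z$ while satisfying $c^{q+1}=T$ and $\left(\frac1c\right)^{q+1}=\frac1T$, it is harmless to fix a single $T$ from the pair $\{T,\frac1T\}$. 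By Proposition~\ref{prop:decomposition} this $T$ lies in $\GF{n}^\star$ precisely when $\Tr{n}(\frac1Y)=0$ and in $\mu_{2^n+1}^\star$ precisely when $\Tr{n}(\frac1Y)=1$, so the trichotomy of the theorem corresponds to ``$T$ a non-cube in $\GF{n}^\star$'', ``$T$ a cube in $\GF{n}^\star$'', and ``$T\notin\GF{n}$''.

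Next I would count, in each case, the $c$ with $c^{q+1}=T$, using that $n$ even forces $\gcd(q+1,2^n-1)=3$ and $\gcd(q+1,2^n+1)=1$ by \eqref{gcd:+-} and \eqref{gcd:++}. If $T\in\GF{n}^\star$, then every solution $c$ already lies in $\GF{n}^\star$ (a solution in $\mu_{2^n+1}^\star$ would force $T^2=T^{2^n+1}=1$, i.e. $T=1$, which is excluded), and $c\mapsto c^{q+1}$ is $3$-to-$1$ from $\GF{n}^\star$ onto the cubes; hence there are $0$ solutions when $T$ is a non-cube (Case~1) and $3$ when $T$ is a cube (Case~2). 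If instead $T\in\mu_{2^n+1}^\star$, then a solution $c$ cannot lie in $\GF{n}^\star$ (else $T=c^{q+1}\in\GF{n}$), and since $\gcd(q+1,2^n+1)=1$ the map $c\mapsto c^{q+1}$ is a bijection of $\mu_{2^n+1}$, so there is the single preimage $c=T^{(q+1)^{-1}\bmod 2^n+1}$ (Case~3).

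The step I would treat most carefully --- and the only one that is not mere bookkeeping --- is verifying that in Case~2 the three values of $c$ yield three \emph{distinct} elements $z=c+\frac1c$, so that $D_{q+1}^{-1}(Y)$ really has three members. Writing the three solutions as $c,cw,cw^2$ with $\GF[4]{}^{*}=\{1,w,w^2\}$ the cube roots of unity, I would invoke the elementary equivalence $A+\frac1A=B+\frac1B\iff(A+B)(AB+1)=0$: a coincidence $cw_1+\frac{1}{cw_1}=cw_2+\frac{1}{cw_2}$ then forces $w_1=w_2$ or $c^2w_1w_2=1$. The latter puts $c\in\GF[4]{}^{*}$, so $c^3=1$; as $k$ odd makes $3\mid q+1$, this gives $T=c^{q+1}=1$, contradicting $T\neq1$. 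Hence only $w_1=w_2$ survives and the three $z$ are pairwise distinct. Assembling the three cases then yields the stated description. I expect this distinctness check, together with correctly assigning each $T$ to $\GF{n}^\star$ versus $\mu_{2^n+1}^\star$, to carry all the genuine content, the remainder being the gcd computations already established.
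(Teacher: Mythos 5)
Your proposal is correct and takes essentially the same approach as the paper's own argument: the decomposition $z=c+\frac1c$ via Proposition~\ref{prop:decomposition}, the reduction through \eqref{eq:ADickson} to solving $T+\frac1T=Y$ and $c^{q+1}=T$, the case counts from the gcd facts \eqref{gcd:+-} and \eqref{gcd:++}, and the identical distinctness check using $(A+B)(AB+1)=0$ together with $3\mid q+1$ forcing $T=1$. The only differences are cosmetic --- you spell out the harmlessness of the choice between $T$ and $\frac1T$ and the exclusion $T\neq 1$, which the paper leaves implicit.
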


\begin{remark}
  Item 1 of Theorem~\ref{even_n_odd_k} recovers \cite[Theorem
  2.1]{BRACKEN2014} which states: when $n$ is even and $\gcd(n,k)=1$ (so $k$ is odd), $P_a$ has no zeros in $\GF n$ if
  and only if $a^{-1}=f_{k,q+1}\left(\frac{1}{T+\frac1T}\right)^{\frac 2q}$ for
  some non-cube $T$ of $\GF n^\star$.  Indeed, the statement of Theorem 2.1 in \cite{BRACKEN2014} is not
  exactly what we write but it is worth noticing that the quantity
  that is denoted $A(b)$ in \cite{BRACKEN2014} satisfies
  ${A(b)}^{-1}=f_{k,q+1}\left(\frac{1}{b^{\frac14}+\frac 1{b^{\frac
          14}}}\right)^{\frac2q}$.
\end{remark}
\begin{theorem}
  \label{odd_n_odd_k}
 Let $k$ be odd and $n$ be odd. Let $Y\in\GF n^*$. Let $T$ be any element of $\GF {2n}$ such that $T+\frac 1T=Y$ (this can be given by \eqref{solution:T+1/T=Y}).
  \begin{enumerate}
  \item If $T$ is a non-cube in $\mu_{2^n+1}^{\star}$, then
   \[D_{q+1}^{-1}(Y)=\emptyset.\]
  \item If $T$ is a cube in $\mu_{2^n+1}^{\star}$, then
  \begin{equation*}
     D_{q+1}^{-1}(Y)=\left\{cw+\frac{1}{cw}\mid c^{q+1} = T,\,c\in\mu_{2^n+1}^{\star},\,
     w\in\GF[4]{}^*\right\}.
   \end{equation*}
  \item If $T$ is in $\GF{n}$,  then
   \begin{equation*}
     D_{q+1}^{-1}(Y)=\left\{T^{(q+1)^{-1}\mod 2^n-1}+\frac{1}{T^{(q+1)^{-1}\mod 2^n-1}}\right\}.
   \end{equation*}
 \end{enumerate}
\end{theorem}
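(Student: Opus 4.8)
The plan is to transpose the analysis of Theorem~\ref{even_n_odd_k} to the case of odd $n$, where the arithmetic collected in subsection~\ref{sec:probl-refd} merely interchanges the roles played by $\GF n^\star$ and $\mu_{2^n+1}^\star$. The backbone is \eqref{eq:ADickson}: a prospective preimage $z\in\GF n$ of $Y$ is nonzero (since $D_{q+1}(0)=0\neq Y$) and hence writes $z=c+\frac1c$ by Proposition~\ref{prop:decomposition}, and then $D_{q+1}(z)=Y$ becomes $c^{q+1}+\frac1{c^{q+1}}=Y$. Thus $T:=c^{q+1}$ is a root of $s+\frac1s=Y$, i.e.\ equals the prescribed element or its inverse; as $c$ and $\frac1c$ give the same $z$ while $\frac1c$ realises the other root, it suffices to solve $c^{q+1}=T$ for the given $T$. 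Proposition~\ref{prop:decomposition} applied to $Y$ places both roots together either in $\GF n^\star$ or in $\mu_{2^n+1}^\star$, so the three cases of the statement are exhaustive.

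I would settle case~3 first. If $T\in\GF n$ then in fact $T\in\GF n^\star$ (as $Y\neq0$ rules out $T\in\{0,1\}$), and every admissible $c$ satisfies $c^{q+1}\in\{T,\frac1T\}\subseteq\GF n^\star$, which forces $c\in\GF n^\star$ by the observation already recorded before \eqref{gcd:+-}. Since $n$ is odd, \eqref{gcd:+-} yields $\gcd(q+1,2^n-1)=1$, so $c\mapsto c^{q+1}$ is a bijection of $\GF n^\star$ and the unique solution is $c=T^{(q+1)^{-1}\bmod 2^n-1}$, giving the single element of item~3.

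For cases~1 and~2 we have $T\in\mu_{2^n+1}^\star$ (hence also $\frac1T\in\mu_{2^n+1}^\star$), and every admissible $c$ lies in $\mu_{2^n+1}^\star$: a value $c\in\GF n^\star$ would give $c^{q+1}\in\GF n^\star$, which is disjoint from $\mu_{2^n+1}^\star$ because $\gcd(2^n-1,2^n+1)=1$. Now the decisive gcd is $\gcd(q+1,2^n+1)=3$ by \eqref{gcd:++} --- this is exactly where the oddness of $n$ reverses the computation relative to Theorem~\ref{even_n_odd_k} --- so $c\mapsto c^{q+1}$ is $3$-to-$1$ from the cyclic group $\mu_{2^n+1}$ onto its subgroup of cubes. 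Hence there is no $c$ when $T$ is a non-cube (item~1), and exactly three solutions $c,cw,cw^2$ with $w\in\GF[4]{}^*$ when $T$ is a cube (item~2).

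The only step with genuine content is verifying that in item~2 these three values give three distinct $z=c+\frac1c$, and I expect this to be the main point to get right. It reuses the argument preceding Theorem~\ref{even_n_odd_k}: from $A+\frac1A=B+\frac1B\iff(A+B)(AB+1)=0$, a collision would force $cw_1=\frac1{cw_2}$, whence $T=1$ because $3\mid q+1$ annihilates the $\GF[4]{}^*$ factors --- impossible since $T\in\mu_{2^n+1}^\star$. The one feature special to this regime, which must be checked, is that the cube roots of unity $\GF[4]{}^*$ really lie inside $\mu_{2^n+1}$; this holds precisely because $n$ is odd, so $3\mid 2^n+1$. With this in hand the statement follows by transcribing the $n$-odd gcd data into the template already built for the even case.
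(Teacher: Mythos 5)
Your proof is correct and follows essentially the same route as the paper's own argument (the discussion in subsection~\ref{sec:probl-refd} preceding the theorem): decompose $z=c+\frac 1c$ via Proposition~\ref{prop:decomposition}, reduce through \eqref{eq:ADickson} to solving $c^{q+1}=T$ with $c$ forced into $\GF n^\star$ or $\mu_{2^n+1}^{\star}$ according to where $T$ lies, apply the gcd facts \eqref{gcd:+-} and \eqref{gcd:++} for odd $n$, and rule out collisions among the three solutions via $(A+B)(AB+1)=0$ together with $3\mid q+1$. Your explicit verification that $\GF[4]{}^*\subseteq\mu_{2^n+1}$ because $3\mid 2^n+1$ for odd $n$ (so that $cw$ indeed lies in $\mu_{2^n+1}^{\star}$) is a detail the paper leaves implicit, not a different approach.
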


\subsection{On the roots in $\GF{n}$ of $P_a(x)$}
\label{sec:zeros-refeq:1}

We sum up the results of previous subsections to give an explicit
expression of the roots in $\GF{n}$ of $P_a(x)$.

Let $k$ denote any positive integer coprime with $n$ and $a\in\GF
n^*$.

First, let us consider the case of odd $k$.  Let $k^\prime$ be the
inverse of $k$ modulo $2n$.  Define
\begin{displaymath}
  T=1+\frac{1}{S_{n,k}\left(\frac{R_{k,k^\prime}\left(a^{-\frac
  q2}\right)}{\zeta+1}\right)},
\end{displaymath}
 where $\zeta$ is any element of $\GF{2n}^\star$ such that
    $\zeta^{2^n+1}=1$, $S_{n,k}(x)=\sum_{i=0}^{n-1}x^{q^{i}}$ and
    $R_{k,k^\prime}$ is defined by (\ref{eq:Rkk'}).

According to Theorem~\ref{thm:SolvingA}, Theorem~\ref{even_n_odd_k}
and Theorem~\ref{odd_n_odd_k}, we have followings.
\begin{theorem}
  \label{thm:maineven}
  Let $n$ be even, $\gcd(n,k)=1$ and $a\in \GF{n}^*$.
  \begin{enumerate}
  \item If $T$ is a non-cube in $\GF n$, then $P_a(x)$ has no zeros in $\GF n$.
  \item If $T$ is a cube in $\GF n$,
    then $P_a(x)$ has three distinct zeros
    $\frac{\left(cw+\frac 1{cw}\right)^{q-1}}{YT_k\left(\frac 1Y\right)^{\frac
        2q}}$ in $\GF n$, where $c^{q+1}=T$,
    $w\in\GF[4]{}^*$ and $Y=T+\frac 1T$.
  \item If $T$ is not in $\GF{n}$, then $P_a(x)$
    has a unique zero  $\frac{\left(c+\frac{1}{c}\right)^{q-1}}{YT_k\left(\frac 1Y\right)^{\frac
        2q}}$ in $\GF n$, where $c={T}^{(q+1)^{-1}\mod 2^n+1}$ and $Y=T+\frac 1T$.
\end{enumerate}
\end{theorem}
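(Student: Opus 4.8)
The plan is to assemble the three results already in hand: the reduction of the zero set of $P_a$ to a Dickson preimage problem (Theorem~\ref{thm:main}(\ref{oddcase})), the explicit solution of Problem~\ref{MCMProblem} (Theorem~\ref{thm:SolvingA}), and the description of $D_{q+1}^{-1}(Y)$ for even $n$ (Theorem~\ref{even_n_odd_k}). Since $n$ is even and $\gcd(n,k)=1$, the integer $k$ is forced to be odd, so all three results apply. First I would invoke Theorem~\ref{thm:main}(\ref{oddcase}) to write
\[
\{x\in\GF n\mid P_a(x)=0\}=\left\{\frac{z^{q-1}}{YT_k\left(\frac 1Y\right)^{\frac 2q}}\ \Big|\ D_{q+1}(z)=Y,\ z\in\GF n\right\},
\]
where $Y\in\GF n^*$ is the unique element with $a=f_{k,q+1}\left(\frac 1Y\right)^{-\frac 2q}$.

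Next I would identify this $Y$ with the one produced by Theorem~\ref{thm:SolvingA}. Raising the defining relation $a=f_{k,q+1}\left(\frac 1Y\right)^{-\frac 2q}$ to the power $\frac q2$ gives exactly equation~(\ref{eq:ProblemA}), so the two occurrences of $Y$ coincide, and the ``Furthermore'' clause of Theorem~\ref{thm:SolvingA} yields $Y=T+\frac 1T$ with the $T$ of the statement. This is the bookkeeping step to get right: the present $T$ is an element satisfying $T+\frac 1T=Y$, which is precisely the hypothesis of Theorem~\ref{even_n_odd_k}, so its trichotomy applies verbatim. Moreover the cube/non-cube alternative is insensitive to the choice between the two roots $T,\frac1T$ of $u+\frac1u=Y$, since $T$ is a cube iff $\frac1T$ is; and because $Y\neq0$ we have $T\neq0,1$, so ``$T\in\GF n$'' and ``$T\in\GF n^\star$'' describe the same situation.

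With $Y$ and $T$ fixed I would feed the three cases of Theorem~\ref{even_n_odd_k} into the preimage set above. If $T$ is a non-cube in $\GF n^\star$ the set $D_{q+1}^{-1}(Y)$ is empty, so $P_a$ has no zero; if $T\notin\GF n$ it is the singleton $\{c+\frac1c\}$ with $c=T^{(q+1)^{-1}\bmod 2^n+1}$, giving one zero; if $T$ is a cube in $\GF n^\star$ it is $\{cw+\frac1{cw}\mid w\in\GF[4]{}^*\}$ with $c^{q+1}=T$. Substituting each such $z$ into $z^{q-1}/\bigl(YT_k(\frac1Y)^{\frac2q}\bigr)$ reproduces the three claimed expressions.

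The only genuinely substantive point is the count of distinct zeros in the cube case. Here I would use that $\gcd(q-1,2^n-1)=1$, so $x\mapsto x^{q-1}$ is a bijection of $\GF n$, and that $YT_k\left(\frac1Y\right)^{\frac2q}$ is a fixed nonzero scalar: indeed $T_k\left(\frac1Y\right)^2+T_k\left(\frac1Y\right)=\left(\frac1Y\right)^q+\frac1Y$, so $T_k\left(\frac1Y\right)=0$ would force $\frac1Y\in\GF[q]{}\cap\GF n=\GF{}$, hence $Y=1$ and $T_k(1)=1$ since $k$ is odd, a contradiction. Composing these two injective maps sends the three pairwise distinct preimages $z$ of Theorem~\ref{even_n_odd_k} to three pairwise distinct zeros, which finishes the proof. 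I expect this distinctness bookkeeping, rather than any hard estimate, to be the main thing that needs care.
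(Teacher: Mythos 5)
Your proposal is correct and takes essentially the same route as the paper, which offers no separate proof of Theorem~\ref{thm:maineven} but states it precisely as the assembly of Theorem~\ref{thm:main} (odd $k$, forced here since $n$ is even), Theorem~\ref{thm:SolvingA} for $Y$ and $T$, and Theorem~\ref{even_n_odd_k} for $D_{q+1}^{-1}(Y)$. The supporting checks you supply --- injectivity of $x\mapsto x^{q-1}$ from $\gcd(q-1,2^n-1)=1$, nonvanishing of $YT_k\left(\frac 1Y\right)^{\frac 2q}$, and pairwise distinctness of the three values $cw+\frac 1{cw}$ --- are exactly the points the paper disposes of earlier, in subsection~\ref{sec:preliminaries} and in the discussion preceding Theorem~\ref{even_n_odd_k}.
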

\begin{remark}
  When $k=1$, that is, $P_a(x)=x^3+x+a$,  Item (1) of Theorem~\ref{thm:maineven} is exactly Corollary
  2.2 of \cite{BRACKEN2014} which states that, when $n$ is even, $P_a$
  is irreducible over $\GF n$ if and only if $a=c+\frac1c$ for some
  non-cube $c$ of $\GF n$.
\end{remark}

\begin{theorem}
  \label{thm:mainoddodd}
  Let $n$ and $k$ be odds with $\gcd(n,k)=1$ and $a\in \GF{n}^*$.
  \begin{enumerate}
  \item If $T$ is a non-cube in $\mu_{2^n+1}^{\star}$,
    then $P_a(x)$ has no zeros in $\GF n$.
  \item If $T$ is a cube in $\mu_{2^n+1}^{\star}$,
    then $P_a(x)$ has three distinct zeros
    $\frac{\left(cw+\frac 1{cw}\right)^{q-1}}{YT_k\left(\frac 1Y\right)^{\frac
        2q}}$ in $\GF n$, where $c^{q+1}=T$,
    $w\in\GF[4]{}^*$ and $Y=T+\frac 1T$.
  \item If $T$ is in $\GF n$, then  $P_a(x)$ has a unique
    zero  $\frac{\left(c+\frac{1}{c}\right)^{q-1}}{YT_k\left(\frac 1Y\right)^{\frac
        2q}}$ in $\GF n$, where $c={T}^{(q+1)^{-1}\mod 2^n-1}$ and $Y=T+\frac 1T$.
\end{enumerate}
\end{theorem}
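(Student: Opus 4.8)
The plan is to assemble Theorem~\ref{thm:mainoddodd} from the three ingredients already established for odd $k$: the reduction of the zero set of $P_a$ to a Dickson preimage problem (Theorem~\ref{thm:main}, item~\ref{oddcase}), the explicit solution of Problem~\ref{MCMProblem} together with its representation $Y=T+\frac1T$ (Theorem~\ref{thm:SolvingA}), and the explicit description of $D_{q+1}^{-1}(Y)$ in the odd-$n$, odd-$k$ case (Theorem~\ref{odd_n_odd_k}). Since both $n$ and $k$ are odd, nothing needs to be reduced via the even-$k$ trick, so the argument is a direct chaining of these results.

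First I would invoke Theorem~\ref{thm:main}, item~\ref{oddcase}, which gives
\[
  \{x\in\GF n\mid P_a(x)=0\} = \left\{\frac{z^{q-1}}{YT_k\left(\frac 1Y\right)^{\frac 2q}}\ \Big|\ D_{q+1}(z)=Y,\ z\in\GF n\right\},
\]
where $Y\in\GF n^*$ is the unique solution of~(\ref{eq:ProblemA}). Theorem~\ref{thm:SolvingA} then identifies this $Y$ and, crucially, supplies the factorization $Y=T+\frac1T$ for precisely the element $T$ defined just before the statement. Hence the hypotheses on $T$ are in fact hypotheses on the correct $Y$, and the whole problem collapses to computing $D_{q+1}^{-1}(Y)$ and pushing each preimage through the map $z\mapsto z^{q-1}/(YT_k(1/Y)^{2/q})$.

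Next I would split along the three cases of Theorem~\ref{odd_n_odd_k}, applied to this $Y$ with the same $T$. If $T$ is a non-cube in $\mu_{2^n+1}^\star$, then $D_{q+1}^{-1}(Y)=\emptyset$ and the zero set is empty, giving item~(1). If $T$ is a cube in $\mu_{2^n+1}^\star$, then $D_{q+1}^{-1}(Y)=\{cw+\frac1{cw}\mid c^{q+1}=T,\ c\in\mu_{2^n+1}^\star,\ w\in\GF[4]{}^*\}$, and substituting these three preimages into the zero formula produces exactly the three zeros of item~(2). If $T\in\GF n$, then $D_{q+1}^{-1}(Y)$ is the singleton built from $c=T^{(q+1)^{-1}\mod 2^n-1}$, yielding the unique zero of item~(3). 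In each case the preimages $z=c+\frac1c$ lie in $\GF n$ by Proposition~\ref{prop:decomposition}, so the substitution is legitimate, and the count $N_a\in\{0,1,3\}$ follows immediately.

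The one point demanding care is the distinctness of the three zeros in item~(2). The three preimages $z$ are already distinct by the argument preceding Theorem~\ref{even_n_odd_k}, where the coincidence $cw_1=\frac1{cw_2}$ is excluded because $3\mid q+1$ would then force $T=1$. To carry distinctness over to the zeros, I would use $\gcd(q-1,2^n-1)=2^{\gcd(k,n)}-1=1$, so that $x\mapsto x^{q-1}$ is a bijection of $\GF n$; dividing by the fixed nonzero constant $YT_k(1/Y)^{2/q}$ then preserves distinctness. Beyond this, the proof is pure bookkeeping, and I do not anticipate any genuine obstacle once the three cited theorems are in place.
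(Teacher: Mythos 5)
Your proposal is correct and follows exactly the route the paper itself takes: Theorem~\ref{thm:mainoddodd} is obtained there by chaining Theorem~\ref{thm:main} (item~\ref{oddcase}), Theorem~\ref{thm:SolvingA} (including $Y=T+\frac1T$), and Theorem~\ref{odd_n_odd_k}, with distinctness of the three zeros resting on the argument preceding Theorem~\ref{even_n_odd_k} and the bijectivity of $x\mapsto x^{q-1}$ from $\gcd(q-1,2^n-1)=1$. Your write-up is in fact slightly more careful than the paper, which states the theorem without explicit proof as a direct consequence of those results.
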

When $k$ is even, following Item (2) of Theorem~\ref{thm:main}, we
introduce $l=n-k$, $q^\prime=2^{l}$ and $l^\prime$ the inverse of
$l$ modulo $2n$.  Define
\begin{displaymath}
  T^\prime=1+\frac{1}{S_{n,l}\left(\frac{R_{l,l^\prime}\left(a^{-\frac
  {(q^\prime)^2}{2}}\right)}{\zeta+1}\right)},
\end{displaymath}
 where $\zeta$ is any element of $\GF{2n}^\star$ such that
    $\zeta^{2^n+1}=1$, $S_{n,l}(x)=\sum_{i=0}^{n-1}x^{{q^\prime}^{i}}$ and
    $R_{l,l^\prime}$ is defined by (\ref{eq:Rkk'}).

\begin{theorem}
  \label{thm:mainoddeven}
  Let $n$ be odd and $k$ be even with $\gcd(n,k)=1$. Let $a\in
  \GF{n}^*$.
  \begin{enumerate}
  \item If $T^\prime$ is a non-cube $\mu_{2^n+1}^{\star}$,
    then $P_a(x)$ has no zeros in $\GF n$.
  \item If $T^\prime$ is a cube in $\mu_{2^n+1}^{\star}$,
    then $P_a(x)$ has three distinct zeros
    $1+\frac{\left(dw+\frac 1{dw}\right)^{q^\prime-1}}{Y^\prime T_{l}\left(\frac 1{Y^\prime}\right)^{\frac
        2{q^\prime}}}$ in $\GF n$, where $d^{q^{\prime}+1}=T^\prime$,
    $w\in\GF[4]{}^*$ and $Y^\prime=T^\prime+\frac 1{T^\prime}$.
  \item If $T^\prime$ is in $\GF n$, then $P_a(x)$ has a unique
    zero
    $1+\frac{{\left(c+\frac{1}{c}\right)}^{q^\prime-1}}{Y^\prime
      T_{l}\left(\frac 1{Y^\prime}\right)^{\frac 2{q^\prime}}}$ in $\GF n$,
      where $c={T^\prime}^{(q^\prime+1)^{-1}\mod 2^n-1}$ and
    $Y^\prime=T^\prime+\frac 1{T^\prime}$.
  \end{enumerate}
\end{theorem}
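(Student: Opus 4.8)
The plan is to treat this as the direct counterpart, under the reduction of Item~(2) of Theorem~\ref{thm:main}, of the odd--odd case already settled in Theorem~\ref{thm:mainoddodd}. Since $n$ is odd and $k$ is even, the integer $l=n-k$ is odd, and $\gcd(l,n)=\gcd(n-k,n)=\gcd(k,n)=1$, so the whole odd-$k$ machinery of Subsections~\ref{sec:probl-refmcmpr} and~\ref{sec:probl-refd} applies verbatim after the substitutions $k\mapsto l$, $q\mapsto q^\prime=2^{l}$ and $a\mapsto a^{q^\prime}$. Item~(2) of Theorem~\ref{thm:main} already records the outcome of the change of variable coming from \eqref{eq:even_case}: the zeros of $P_a$ in $\GF n$ are exactly the elements $1+\frac{z^{q^\prime-1}}{Y^\prime T_{l}\left(\frac1{Y^\prime}\right)^{2/q^\prime}}$ with $z\in\GF n$ and $D_{q^\prime+1}(z)=Y^\prime$, where $Y^\prime$ is the unique solution of $a^{q^\prime}=f_{l,q^\prime+1}\!\left(\frac1{Y^\prime}\right)^{-2/q^\prime}$. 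Thus everything reduces to (i) identifying $Y^\prime$ and (ii) computing $D_{q^\prime+1}^{-1}(Y^\prime)$.

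For step (i) I would invoke Theorem~\ref{thm:SolvingA} with the above substitutions. Replacing $a$ by $a^{q^\prime}$ turns the argument $a^{-q/2}$ of $R_{k,k^\prime}$ into $(a^{q^\prime})^{-q^\prime/2}=a^{-(q^\prime)^2/2}$, which is exactly the quantity fed into $R_{l,l^\prime}$ in the definition of $T^\prime$ preceding the statement; here $l^\prime$ is the inverse of $l$ modulo $2n$. The ``Furthermore'' clause of Theorem~\ref{thm:SolvingA} then yields $Y^\prime=T^\prime+\frac1{T^\prime}$ with $T^\prime=1+\big(S_{n,l}(\cdots)\big)^{-1}$ as displayed, so $T^\prime$ is a fixed choice of solution of $T+\frac1T=Y^\prime$ in $\GF{2n}$.

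For step (ii) the key point is that both $l$ and $n$ are odd, so Theorem~\ref{odd_n_odd_k} is the applicable description of $D_{q^\prime+1}^{-1}(Y^\prime)$. It splits into the three alternatives according to whether $T^\prime$ is a non-cube in $\mu_{2^n+1}^\star$, a cube in $\mu_{2^n+1}^\star$, or lies in $\GF n$; these are precisely the three cases of the statement. Substituting the resulting $z$-values into the map $z\mapsto 1+\frac{z^{q^\prime-1}}{Y^\prime T_{l}(1/Y^\prime)^{2/q^\prime}}$ produces the claimed expressions, with the cube case giving the three representatives $z=dw+\frac1{dw}$, $w\in\GF[4]{}^*$, $d^{q^\prime+1}=T^\prime$, and the $\GF n$ case giving the single $z=c+\frac1c$ with $c=(T^\prime)^{(q^\prime+1)^{-1}\bmod 2^n-1}$.

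The last thing to check is that the map $z\mapsto 1+\frac{z^{q^\prime-1}}{Y^\prime T_{l}(1/Y^\prime)^{2/q^\prime}}$ neither merges distinct preimages nor creates new coincidences, so that the counts ``none/one/three'' transfer unchanged. Because $l$ is odd and coprime to $n$ we have $\gcd(q^\prime-1,2^n-1)=2^{\gcd(l,n)}-1=1$, hence $z\mapsto z^{q^\prime-1}$ is a bijection of $\GF n$; composing with a nonzero scaling and a translation keeps it injective, so distinct $z$ give distinct zeros. Combined with the fact (established before Theorem~\ref{even_n_odd_k}) that the three values of $c$ in the cube case give three distinct values of $z$, this finishes the argument. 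I expect no genuine obstacle here: the entire proof is an exercise in tracking the substitutions $k\mapsto n-k$, $q\mapsto q^\prime$, $a\mapsto a^{q^\prime}$ through Theorems~\ref{thm:SolvingA} and~\ref{odd_n_odd_k}, the only points demanding care being the correct exponent $a^{-(q^\prime)^2/2}$ and the extra $+1$ translation carried by \eqref{eq:even_case}.
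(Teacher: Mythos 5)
Your proposal is correct and follows exactly the route the paper takes: it derives Theorem~\ref{thm:mainoddeven} by combining Item~(2) of Theorem~\ref{thm:main} with Theorem~\ref{thm:SolvingA} and Theorem~\ref{odd_n_odd_k} under the substitutions $k\mapsto l=n-k$, $q\mapsto q^\prime$, $a\mapsto a^{q^\prime}$, including the correct argument $a^{-(q^\prime)^2/2}$ fed into $R_{l,l^\prime}$. Your added verification that $z\mapsto 1+z^{q^\prime-1}/\bigl(Y^\prime T_l(1/Y^\prime)^{2/q^\prime}\bigr)$ is injective (via $\gcd(q^\prime-1,2^n-1)=1$) is a detail the paper handles implicitly in Subsection~\ref{sec:preliminaries}, so nothing is missing or different in substance.
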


\begin{remark}
When $n$ is even, Theorem \ref{thm:maineven} shows that $P_a$ has a
unique solution if and only if $T$ is not in $\GF n$. According to
Proposition \ref{thm:MCMLinearEquation}, this is equivalent to $\Tr
n(R_{k,k^\prime}(a^{-\frac q2}))=1$, that is, $\Tr
n(R_{k,k^\prime}(a^{-1}))=1$. When $n$ is odd and $k$ is odd (resp.
even), Theorem \ref{thm:mainoddodd} and Theorem
\ref{thm:mainoddeven} show that $P_a$ has a unique zero in $\GF n$
if and only if $T$ (resp. $T^\prime$) is in $\GF n$. According to
Proposition \ref{thm:MCMLinearEquation}, this is equivalent to $\Tr
n(R_{k,k^\prime}(a^{-1}))=0$ or  $\Tr n(R_{l,l^\prime}(a^{-1}))=0$
for odd $k$ or even $k$, respectively. By the way, for $x\in
\GF{n}$,
$Q_{l,l^\prime}^\prime\left(x+x^{q^\prime}\right)=\frac{(x+x^{q^\prime})^{q^\prime+1}}{x^{q^\prime}+x^{2{q^\prime}}}=\left(\frac{(x+x^q)^{q+1}}{x^q+x^{2q}}\right)^{2^{(n-k)^2}}=Q_{k,k^\prime}^\prime\left(x+x^{q}\right)^{2^{(n-k)^2}}$.
Hence if $T^\prime \in \GF{n}$, then
$R_{l,l^\prime}(a^{-1})=R_{k,k^\prime}(a^{-1})^{\frac{1}{2^{(n-k)^2}}}$,
and so $\Tr n(R_{l,l^\prime}(a^{-1}))=0$ is equivalent to $\Tr
n(R_{k,k^\prime}(a^{-1}))=0$. After all, we can recover
\cite[Theorem 1]{HELLESETH2008} which states that $P_a$ has a unique
zero in $\GF n$ if
  and only if $\Tr n(R_{k,k^\prime}(a^{-1})+1)=1$.
\end{remark}

\section{Conclusion}

In
\cite{BLUHER2004,BLUHER2016,BRACKEN2014,HELLESETH2008,HELLESETH2010},
 partial results about the zeros of $P_a(x)=x^{2^k+1}+x+a$ in $\GF
n$ have been obtained. In this paper, we provided explicit
expression of all possible roots in $\GF n$ of $P_a(x)$ in terms of
$a$ and thus finish the study initiated in these papers when
$\gcd(n,k)=1$. We showed that the problem of finding zeros in $\GF
n$ of $P_a(x)$ in fact can be divided into two problems with odd
$k$: to find the unique preimage of an element in $\GF n$ under a
M$\ddot{u}$ller-Cohen-Matthews (MCM) polynomial
 and to find preimages of an
element in $\GF n$ under a Dickson polynomial. We completely solved
these two independent problems. We also presented an explicit
solution formula for the affine equation $x^{2^k}+x=b, b\in \GF{n}$.


\end{document}